\newcommand{\mF}{\mathcal{F}}
\newcommand{\tV}{{\rm{VaR}}}
\newcommand{\R}{\mathbb{R}}
\newcommand{\N}{\mathbb{N}}
\newcommand{\mP}{\mathbf{P}}
\newcommand{\mE}{\mathbb{E}}
\newcommand{\md}{\;{\rm d}}
\newcommand{\s}{\sum\limits}
\newcommand{\w}{\wedge}
\newcommand{\bq}{\begin{eqnarray*}}
\newcommand{\eq}{\end{eqnarray*}}
\newcommand{\one}{1\mkern-5mu{\hbox{\rm I}}}
\theoremstyle{plain}
\newtheorem{definition}{Definition}[section]
\newtheorem{remark}[definition]{Remark}
\newtheorem{lemma}[definition]{Lemma}
\newtheorem{proposition}[definition]{Proposition}
\newtheorem{corollary}[definition]{Corollary}
\newtheorem{example}[definition]{Example}
\newenvironment{proof}{\noindent{\textit{Proof:}}}{%
\unskip\nobreak\hfil\penalty50\hskip1em\null\nobreak
$\Box$
\parfillskip=\z@\finalhyphendemerits=0\endgraf\bigskip}
\let\oldendBsp\endBsp
\def\endBsp{\unskip\nobreak\hfil\penalty50\hskip1em\null\nobreak\hfil%
$\blacksquare$\parfillskip=\z@\finalhyphendemerits=0\endgraf\oldendBsp}
\let\oldendBem\endBem
\def\endBem{\unskip\nobreak\hfil\penalty50\hskip1em\null\nobreak\hfil%
$\blacksquare$\parfillskip=\z@\finalhyphendemerits=0\endgraf\oldendBem}
\title{Some Optimisation Problems in Insurance with a Terminal Distribution Constraint}
\author[1]{Katia Colaneri}
\author[2]{Julia Eisenberg}
\author[3]{Benedetta Salterini}
\date{}
\affil[1]{University of Rome - Tor Vergata, katia.colaneri@uniroma2.it}
\affil[2]{TU Wien, julia.eisenberg@tuwien.ac.at}
\affil[3]{University of Florence, benedetta.salterini@unifi.it}
\begin{document}
\maketitle
\abstract
In this paper, we study two optimisation settings for an insurance company, under the constraint that the terminal surplus at a deterministic and finite time $T$ follows a normal distribution with a given mean and a given variance. In both cases, the surplus of the insurance company is assumed to follow a Brownian motion with drift.
\\First, we allow the insurance company to pay dividends and seek to maximise the expected discounted dividend payments or to minimise the ruin probability under the terminal distribution constraint. Here, we find explicit expressions for the optimal strategies in both cases: in discrete and continuous time settings.
\\Second, we let the insurance company buy a reinsurance
contract for a pool of insured or a branch of business. To achieve a certain level of sustainability (i.e.\ the collected premia
should be sufficient to buy reinsurance and to pay the occurring
claims)  the initial capital is set to be zero. We only allow for
piecewise constant reinsurance strategies producing a normally
distributed terminal surplus, whose mean and variance lead to a given
Value at Risk or Expected Shortfall at some confidence level $\alpha$.
We investigate the question which admissible reinsurance strategy
produces a smaller ruin probability, if the ruin-checks are due at
discrete deterministic points in time.
\bigskip
\\ {\bf Keywords}: insurance, reinsurance, dividends, optimal control, distributional constraints, value at risk, expected shortfall.
\medskip
\\{\bf 2020 MSC}: 91G05, 91B05, 93B03

\section{Introduction}
This paper investigates the problem of dividend maximisation and the problem of ruin minimisation for an insurance company who aims to achieve a certain surplus distribution at a particualr future date. Knowing the surplus distribution, for instance, at regulatory check-times can be important for the calculation of the necessary capital reserves. Measuring the solvency of a collective of risks remains one of the important tasks in insurance mathematics.
Over the years, several risk measures have been proposed and investigated concerning their properties, by adding risk constraints like, for example,  value at risk.

One of the most popular risk measures is the value of expected discounted dividends. Here, one searches for the ``optimal'' dividend strategy, i.e.\ a strategy maximising the value of expected discounted dividends up to the time when the surplus becomes negative. By considering the optimal strategy, the focus is deliberately placed on the surplus' evolution characteristics rather than on the company's managerial skills. Some results on dividend maximisation problems can be found, for instance, in \citet{astak,Shreve}. We further refer to \citet{AlbThReview, avanzi, hipp2020} and references therein for an overview of the existing results.

The optimal dividend payout strategy in the most ``unconstrained'' settings turns out to be of a barrier or of a band type, meaning that the strategy can change from ``paying the maximal possible amount'' to ``paying nothing'' in dependence on the current surplus value. This setup cannot be considered realistic or doable for an insurance company. Moreover, solvency requirements imposed by regulators may not allow to pay dividends according to the optimal, possibly bang-bang, strategy. To make the models more realistic, one needs to impose restrictions. \citet{paulsen2003optimal} studies the optimal dividend problem with a no-bankruptcy constraint -- dividends will not be paid if the surplus is below a certain barrier. An extended setting with transaction costs is analysed in \citet{bai2012optimal}.
\citet{hipp}, considers optimal dividend payment strategies under the constraint that the ruin probability stays under a given boundary. \citet{thonhauser2011} maximise the total discounted utility of dividend payments under strictly positive transaction costs.

The setting considered in the first part of this paper is novel in the following way. The surplus of an insurance company in a finite time interval is modelled by a diffusion process.
We concentrate on the dividend payments -- described by dividend rates -- with two different objective functions: expected discounted dividend payments and ruin probability. In the first case, one faces a maximisation problem, whereas in the second case the ruin probability should be minimised. The surplus can only be controlled at discrete equidistant time points. We introduce a constraint on the set of admissible strategies by requiring that the ex-dividend terminal wealth should be normally distributed with fixed exogenously given mean and variance. To the best of our knowledge, such a constraint has not been considered in an insurance optimisation problems before.
We prove that the optimal strategy in both cases should be deterministic, i.e.\ is decided at time zero. As it is intuitively clear, the strategy leading to the maximal discounted dividend value starts with high payments in the very beginning and decreases approaching the time horizon; the strategy minimising the ruin probability behaves in an opposite way.
\\The results obtained in this first part of the paper heavily rely on the very nature of dividend payments. The control is acting solely on the drift, letting the volatility unchanged. This allows to compare different strategies by comparing their paths. However, choosing a control problem with an impact on the volatility of the surplus process will not allow to use the path-comparison method and will require different, more complex techniques.
A well-established, well-investigated and still quite popular risk measure is the ruin probability -- the probability that a company, a strain of business or a pool of insured risks goes bankrupt in finite time, i.e.\ writes red numbers -- the aggregate claims exceed the collected premia. The surplus, with continuous paths or having jumps, can be controlled, for instance,  by a reinsurance, dividend payments, possible surplus investments into a dependent or independent markets.
\\A technical ruin, when the surplus becomes negative or touches zero, does not compulsory mean that the company has to entirely stop operating.
The time and the severity of ruin are completely neglected by looking solely at the ruin probability. Also, due to Solvency II requirements companies have enough reserves to bridge a certain period of unfavorable business development. For these reasons, the ruin probability might not be a desired risk measure to assess a company's performance. However, paired with some additional constraints it can help choosing a strategy which is, for instance, more risk averse in the eyes of the insurance company.\medskip
\\
In the second part of this manuscript, we are looking at the surplus of an insurance company who buys proportional reinsurance contracts of a specific type.
To control the risk exposure and to be able to meet regulatory requirements, insurance companies need to pay attention to various constraints.
For instance, \citet{bernard2009optimal, lo2017neyman, lo2017unifying, huang2019unifying} search for the optimal reinsurance strategy under a constraint (strictly positive surplus or a fixed risk measure under some prespecified boundary) on the loss at the terminal time. Optimal investment and reinsurance have been considered with constraints on the budget, see \citet{bi2014dynamic}, or on Value at Risk, see, e.g.\ \citet{choulli2001, bi2019optimal, wang2020robust}.  The problem of choosing a reinsurance strategy to minimise the ruin probability
with a Value at Risk (or a Conditional Value at Risk) constraint is considered for instance in \citet{zhang2016, chen2010optimal}  with a finite and infinite time horizon.

In this paper, we seek to find a proportional reinsurance strategy that minimises the ruin probability under a constraint imposed on the distribution of the terminal wealth. Adding a constraint on the terminal surplus has several advantages. For instance, one will be able to calculate any risk measure acting on the terminal wealth: the Value at Risk, the Expected Shortfall or the expected terminal utility, and hence address many regulatory requirements all at once.
\\We consider a finite time interval $[0,T]$, and the direct insurer can change the deductible only twice -- in the beginning and in the middle of the interval. The target is twofold: the terminal post-reinsurance surplus at $T$ should be normally distributed with given mean and variance, and the chosen admissible strategy should lead to the smallest possible ruin probability. We show that the optimal strategy is deterministic, i.e.\ is chosen at time 0, and one is always acting in a risk averse way. That is, the insurer buys less reinsurance in the beginning, in order to let the drift push the surplus upwards, and buys more reinsurance in the second half, reducing the risk of ruin shortly before the regulator's check. We briefly discuss the case where the insurer can update the reinsurance strategy three times, which provides some intuition on how to deal with more than two updates.
To the best of our knowledge, the presented approach is new in many aspects. The discrete nature of the problem and the structure of the optimal strategies, makes this setting easily applicable from a practical point of view.

The paper is organised as follows. In Section \ref{sec:max_dividend}, we introduce and solve the dividend maximisation problem. We address the ruin minimising problem in Section \ref{div:ruin}. The reinsurance optimisation problem is discussed in Section \ref{sec:reinsurance}. We conclude in Section \ref{sec:conclusion}.

\section{Maximising Dividends Under a Terminal Distribution Constraint}\label{sec:max_dividend}
In this section, we consider an insurance company who is allowed to pay dividends. The dividend rate has to be chosen in such a way that the surplus at some future deterministic time $T$ achieves a given distribution. At the same time, the value of expected discounted dividends should be maximised.

We consider a probability space $(\Omega, \mathcal F, \mP)$, a finite time horizon $T>0$ and a Brownian motion $W=(W_t)_{t \in [0,T]}$. We denoted by $\mathbb{F}$ the natural complete and right continuous filtration of $W$, and set $\mF_T=\mF$.
The surplus of the insurance company in the interval $[0,T]$ is modelled by a Brownian motion with drift as
\[
X^\mathbf{0}_t=x+\bar\mu t+\bar\sigma W_t\;,\quad t \in [0,T]
\]
where $x\ge 0$ represents the initial capital and $\bar\mu,\bar\sigma>0$. \\
The company is allowed to pay dividends in form of dividend rates $0\le c\le \xi$ for some given $\xi>0$. It means that the post-dividend process under a dividend strategy $\mathbf{c}=(c_s)_{s\in[0,T]}$ is given by
\begin{equation}\label{eq:surplus}
X^{\mathbf{c}}_t=x+\bar\mu t-\int_0^t c_s \,\mathrm{d}s +\bar\sigma W_t\;, \quad  t\in [0,T].
\end{equation}
Our objective is to determine the strategies that maximise the  expected discounted dividends and simultaneously lead to a normally distributed post-dividend terminal surplus $X_T^\mathbf{c}$. We assume that the target distribution is Gaussian with the mean $x+MT$, and the variance $\delta^2T$, for some $M\in \mathbb{R}$ and $\delta>0$.

At first, the company is only allowed to update a dividend strategy at $n\in\N$ equidistant time points $Tk/n$, $k\in\{0,..,n-1\}$ in the period $[0,T]$. An admissible strategy is a sequence $\mathbf{c}=(c_0, \dots, c_{n-1})$ of dividend rates such that for all $k=0, 1,\dots, n-1$, $c_k\in[0,\xi]$ is an $\mathcal{F}_{\frac{kT}{n}}$-measurable random variable and the total surplus at time $T$ satisfies $X_T^\mathbf{c}\sim N(x+MT, \delta^2T)$. We denote the set of admissible strategies by $\mathcal A_{(n)}$, where the subscript $(n)$ indicates the number of the allowed change points. The accumulated dividends up to time $t$ are then given by
\[
\s_{k=0}^{n-1} c_k\Big(\frac{T(k+1)}n\w t-\frac{Tk}n\w t\Big)
\]
It is worth mentioning that differently than in the classical dividend problems, see for instance \cite{astak}, dividends can be paid (up to time $T$) even if the surplus is negative.
This feature of our model alleviates, to some extent, the drawback of models stopping at the ruin time. A technical ruin does not mean that the company stops operating. In reality, some insurance companies proceed with dividend payments even during protracted crisis times. A famous example provides Munich Re, known for not reducing its dividends since at least 2006, see \cite{munich}.

The following lemma indicates the range of achievable target expectations $x+MT$ by a post-dividend Brownian surplus see equation \eqref{eq:surplus}, at time $T$.

\begin{lemma}\label{lemma:rangeM}
The parameter $M$ in the target distribution of the surplus at time $T$ has to fulfil $\bar\mu-\xi\le M\le\bar\mu$.
\end{lemma}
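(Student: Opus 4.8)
The plan is to exploit the fact that prescribing the terminal law $X^{\mathbf{c}}_T\sim N(x+MT,\delta^2T)$ fixes in particular its first moment, $\E[X^{\mathbf{c}}_T]=x+MT$, and to read off from this a constraint on the admissible dividend rates. First I would take expectations in the terminal surplus identity. Writing the accumulated dividends in the discrete form, equation \eqref{eq:surplus} at $t=T$ reads
\[
X^{\mathbf{c}}_T = x+\bar\mu T-\frac{T}{n}\s_{k=0}^{n-1} c_k+\bar\sigma W_T.
\]
Since $W_T$ is centred and $\E[X^{\mathbf{c}}_T]=x+MT$ by the distributional constraint, taking expectations gives
\[
x+MT = x+\bar\mu T-\frac{T}{n}\s_{k=0}^{n-1}\E[c_k],
\]
and hence $M=\bar\mu-\frac1n\s_{k=0}^{n-1}\E[c_k]$.

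Next I would bound the averaged dividend term. As each admissible rate obeys $0\le c_k\le\xi$ almost surely, its expectation satisfies $0\le\E[c_k]\le\xi$, so the average $\frac1n\s_{k=0}^{n-1}\E[c_k]$ lies in $[0,\xi]$. Substituting this into the expression for $M$ yields at once $\bar\mu-\xi\le M\le\bar\mu$, which is the claim. The identical computation covers the continuous-time formulation: there one replaces $\frac{T}{n}\s_{k=0}^{n-1} c_k$ by $\int_0^T c_s\md s$ and uses the pointwise bound $0\le\int_0^T c_s\md s\le\xi T$ to reach the same conclusion.

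There is no genuine analytic obstacle here: the statement is a first-moment computation combined with the pointwise range of the control. The only point deserving a line of care is the interchange of expectation with the finite sum of adapted rates, together with the observation that the measurability/adaptedness of the $c_k$ is irrelevant to the mean constraint --- only the range $[0,\xi]$ enters. I would close with the remark that the two endpoints correspond, as far as the mean is concerned, to the extreme deterministic strategies $c_k\equiv 0$ (giving $M=\bar\mu$) and $c_k\equiv\xi$ (giving $M=\bar\mu-\xi$), so the range stated for $M$ cannot be narrowed.
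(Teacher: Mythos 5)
Your proposal is correct and follows essentially the same route as the paper: equate the first moment of the terminal surplus, $x+\bigl(\bar\mu-\frac1n\sum_{k=0}^{n-1}\E[c_k]\bigr)T$, with the target mean $x+MT$, and then use the pointwise bound $0\le c_k\le\xi$ to confine the averaged expectations to $[0,\xi]$. The additional remarks on the continuous-time version and on the extreme strategies attaining the endpoints are sound but not needed for the statement.
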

\begin{proof}
For any admissible dividend strategy $\mathbf{c}=(c_0, \dots, c_{n-1})\in \mathcal{A}_{(n)}$,  the distribution of the surplus in equation \eqref{eq:surplus} at time $T$ is Gaussian with mean $$x+\left(\bar\mu -\sum_{k=0}^{n-1}\frac{\mE[c_k]}{n}\right)T=x+MT.$$ Using the fact that $0\le c_k\le \xi$, for every $k\in \{0, \dots n-1\}$ we get that
$$x+\left(\bar\mu -\xi\right)T\le x+\left(\bar\mu -\frac{\sum_{k=0}^{n-1} \mE[c_k]}{n}\right)T\le x+\bar\mu T\;,$$
which proves the statement.
\end{proof}

Note that, for large values of $\xi$, the range of achievable means may include negative values. Although this is mathematically feasible, an insurance company would not pursue a strategy to achieve a negative expected surplus, but it would rather choose $M\in [0,\bar \mu]$, so to obtain a expected net profit at time $T$, even if small. Next we better identify the characteristics of admissible strategies.

\begin{proposition}\label{prop:dividend_strategies}
The set of admissible strategies $\mathcal{A}_{(n)}$ consists of $\mathbf{c}=(c_0,...,c_{n-1})$ such that $\s_{i=1}^{n-1} c_i$ is $\mF_0$-measurable, i.e.\ deterministic.
\end{proposition}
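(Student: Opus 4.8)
The plan is to exploit the block structure of the strategy together with the independence of the Brownian increments over the successive intervals $[kT/n,(k+1)T/n]$, peeling these increments off one at a time while working backwards from $t=T$. Rewriting the terminal constraint in the form
\[
\bar\sigma W_T-\frac{T}{n}\s_{k=0}^{n-1}c_k \sim N\big((M-\bar\mu)T,\ \delta^2T\big),
\]
the goal reduces to showing that the accumulated dividend $D:=\frac{T}{n}\s_{k=0}^{n-1}c_k$ is almost surely constant; since $c_0$ is already $\mathcal F_0$-measurable, this is exactly the assertion of the proposition.

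First I would condition on $\mathcal F_{\frac{(n-1)T}{n}}$. The terminal surplus then splits as an $\mathcal F_{\frac{(n-1)T}{n}}$-measurable part $\nu_{n-1}$ (its conditional mean) plus the increment $\bar\sigma\big(W_T-W_{\frac{(n-1)T}{n}}\big)\sim N(0,\bar\sigma^2T/n)$, and the latter is genuinely independent of $\mathcal F_{\frac{(n-1)T}{n}}$ because no admissible control depends on this last increment. Since the sum is Gaussian and one summand is an \emph{independent} Gaussian, Cram\'er's decomposition theorem forces $\nu_{n-1}$ to be Gaussian as well, and matching variances gives $\mathrm{Var}(\nu_{n-1})=\delta^2T-\bar\sigma^2T/n$. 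The idea is then to iterate this peeling over $k=n-1,n-2,\dots,0$, each step removing one increment of variance $\bar\sigma^2T/n$ and leaving an $\mathcal F_{\frac{kT}{n}}$-measurable Gaussian whose variance has dropped by a further $\bar\sigma^2T/n$.

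After $n$ steps one reaches an $\mathcal F_0$-measurable, hence deterministic, quantity whose variance must equal $\delta^2T-\bar\sigma^2T$; being deterministic, this variance is zero, which simultaneously pins down $\delta=\bar\sigma$ and shows that $\s_{k=0}^{n-1}c_k$ carries no randomness. A technically cleaner route to the identity $\delta=\bar\sigma$ uses the boundedness $0\le c_k\le\xi$ directly: tilting by $e^{t\bar\sigma W_T}$ (a Cameron--Martin/Girsanov change of measure) and using $0\le D\le\xi T$, the cumulant generating function of the constraint is squeezed between two parabolas in $t$ that differ only by a term linear in $t$, so its quadratic coefficient is forced to be $\bar\sigma^2T$.

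The step I expect to be the main obstacle is the backward peeling itself. Although the very last increment is independent of all controls, an \emph{earlier} control $c_k$ may depend on the Brownian increments occurring before time $kT/n$, so at intermediate stages the conditional-mean quantity is not independent of the increments still to be removed, and the decomposition does not factorise as cleanly as in the i.i.d.\ case. Making each peeling step rigorous therefore requires isolating, at stage $k$, the part of the accumulated dividend that is measurable strictly before the increment being removed, and it is precisely here that the uniform bound $c_k\in[0,\xi]$ must be invoked to exclude a nondegenerate but bounded random dividend from reproducing a Gaussian terminal law.
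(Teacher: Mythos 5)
Your opening step is sound, and it is a genuinely different (and in itself rigorous) way to start compared with the paper: writing $X_T^{\mathbf c}=\nu_{n-1}+\bar\sigma\,(W_T-W_{(n-1)T/n})$ with $\nu_{n-1}=x+\bar\mu T-D+\bar\sigma W_{(n-1)T/n}$ and $D=\frac{T}{n}\sum_{k=0}^{n-1}c_k$, the two summands are independent because every $c_k$ is $\mF_{(n-1)T/n}$-measurable, so Cram\'er's decomposition theorem legitimately shows that $\nu_{n-1}$ is Gaussian with variance $\delta^2T-\bar\sigma^2T/n$. The proof, however, effectively stops there. The backward peeling cannot be iterated even once more: to peel off $W_{(n-1)T/n}-W_{(n-2)T/n}$ you would need to write $\bar\sigma W_{(n-1)T/n}-D$ as an $\mF_{(n-2)T/n}$-measurable variable plus an independent Gaussian, but $D$ contains $c_{n-1}$, which may depend on precisely that increment; the two pieces are then neither independent nor individually Gaussian a priori, and Cram\'er's theorem gives nothing. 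You name this obstacle yourself in your last paragraph, but naming it is not resolving it, and it is not a technicality -- the inductive step is the entire content of the proof. Moreover, the intended endpoint of the induction is circular: after removing all Brownian increments the remaining object is $x+\bar\mu T-D$ itself, and declaring it ``$\mF_0$-measurable, hence deterministic'' assumes exactly what the proposition asserts.

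For comparison, the paper avoids any induction: it removes the entire Brownian term in a single stroke via the exponential change of measure $\md\mathbf{Q}/\md\mP\propto e^{\zeta\bar\sigma W_T}$, obtaining $\mE_{\mathbf{Q}}\bigl[e^{-\zeta\sum_{k=1}^{n-1}c_kT/n}\bigr]=e^{\zeta(M-\bar\mu+c_0/n)T+\frac{\delta^2-\bar\sigma^2}{2}\zeta^2T}$ for all $\zeta$, and then exploits $0\le c_k\le\xi$ in a case analysis: if $\delta^2>\bar\sigma^2$ the sum would have to be Gaussian, hence attain negative values with positive probability, contradicting $\sum_{k=1}^{n-1}c_k\ge 0$; if $\delta^2<\bar\sigma^2$ no random variable has such a transform; and if $\delta=\bar\sigma$ the sum must be a constant. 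Your Girsanov ``squeeze'' idea, i.e.\ $e^{-\zeta\xi T}\,\mE[e^{\zeta\bar\sigma W_T}]\le\mE[e^{\zeta(\bar\sigma W_T-D)}]\le\mE[e^{\zeta\bar\sigma W_T}]$ for $\zeta>0$, is close in spirit to this and does rigorously force $\delta=\bar\sigma$; but it only identifies the variance. The actual claim of the proposition -- that $\sum_{k=1}^{n-1}c_k$ is deterministic -- corresponds to the paper's third case, and for that your proposal offers nothing beyond the broken induction. What is missing is an argument that a bounded, nonnegative, $\mF_{(n-1)T/n}$-measurable $D$ for which $\bar\sigma W_{(n-1)T/n}-D$ is Gaussian with variance $\bar\sigma^2(n-1)T/n$ must be almost surely constant; that step, which the paper handles through the measure change and uniqueness of moment generating functions, is the crux and remains unproven in your approach.
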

\begin{proof}
Let $\mathbf{c}=(c_0,\dots, c_{n-1})$ be an arbitrary admissible dividend strategy. The corresponding surplus at time $T$ is then given by
\begin{equation}\label{eq:surplus_1}
X_T^\mathbf{c}=x+\bar\mu T-c_0 \frac{T}{n} - \frac{T}{n}\sum_{k=1}^{n-1} c_k +\bar\sigma W_T\;.
\end{equation}

We now identify the set of dividend strategies that allow to achieve a normal distribution with mean $x+MT$ and variance $\delta^2T$.
Let $Y$ be a generic random variable with $Y\sim N(x+MT, \delta^2 T)$. Then,  for $\zeta \in \mathbb{R}$ it holds that $\mE[e^{\zeta Y}]=e^{\zeta (x+MT)+\frac{\delta^2}{2}\zeta^2T}$. Now we consider the surplus at time $T$, $X^{\mathbf c}_T$.  From \eqref{eq:surplus_1} and the fact that $\mathbf{c}$ is an admissible strategy we get that $X^{\mathbf c}_T\sim N(x+MT, \delta^2 T)$ and it holds that
\begin{align}
e^{\zeta (x+MT)+\frac{\delta^2}{2}\zeta^2T}=\mE[e^{\zeta X^C_T}]=e^{\zeta(x+\bar\mu T-c_0 T/n)}\mE[e^{\zeta\bar\sigma W_T - \zeta \sum_{k=1}^{n-1} c_k T/n}]\;.\label{eq:exp_2}
\end{align}
Let $\mathbf{Q}$ be a probability measure on $(\Omega, \mathcal{F}_T)$ equivalent to $\mP$, with the Radon-Nikodym derivative $\left.\frac{\md \mP}{\md \mathbf{Q}}\right|_{\mF_T}=e^{-\zeta\bar\sigma W_{T}+\frac{\zeta^2\bar\sigma^2T}{2}}$. Then, applying change of measure techniques in \eqref{eq:exp_2} we obtain
\begin{equation*}
\mE[e^{\zeta\bar\sigma W_{T}-\zeta \sum_{k=1}^{n-1}c_k T/n}]=e^{\frac{\zeta^2\bar\sigma^2T}{2}}\mE_\mathbf{Q}[e^{-\zeta \sum_{k=1}^{n-1}c_k T/n}]\;,
\end{equation*}
Together with \eqref{eq:exp_2}, one gets for all $\zeta\in\R$
\begin{equation*}
e^{\zeta (x+MT)+\frac{\delta^2}2\zeta^2T}=e^{\zeta(x+\bar\mu T-c_0 T/n)+\frac{\bar\sigma^2 T \zeta^2}{2}}\mE_\mathbf{Q}[e^{-\zeta \sum_{k=1}^{n-1}c_k T/n}],
\end{equation*}
leading to
\begin{equation*}
\mE_\mathbf{Q}[e^{-\zeta \sum_{k=1}^{n-1}c_k T/n}]=e^{\zeta (M-\bar\mu +c_0/n)T+\frac{\delta^2-\bar\sigma^2}2\zeta^2T}\;.
\end{equation*}
If $\delta^2-\bar\sigma^2> 0$, then by uniqueness of the moment generating functions the variable $\sum_{k=1}^{n-1}c_k \frac{T}{n}$ is normally distributed with mean $(M-\bar\mu +c_0/n)T$ and variance $(\delta^2-\bar\sigma^2)T$. Hence it has positive $\mathbf{Q}$-probability to attain negative values, which contradicts the equivalence of $\mathbf{Q}$ and $\mP$, since $\sum_{k=1}^{n-1}c_k T/n\ge 0$ $\mP$-a.s.
\\
If, instead, $\delta^2-\bar\sigma^2< 0$, there is no random variable with such a moment generating function.
\\Finally, if $\delta=\bar\sigma$, the variable $\sum_{k=1}^{n-1}c_k T/n$ must be a constant, i.e.\ deterministic.
\end{proof}

For the special case $n=2$ we obtain the following corollary.
\begin{corollary}\label{lemma:div1}
The set of admissible strategies $\mathcal{A}_{(2)}$ only consists of deterministic pairs $(c_0,c_1)$, i.e.\ $c_1$ is $\mF_0$-measurable.
\end{corollary}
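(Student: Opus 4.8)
The plan is to obtain the statement directly from Proposition \ref{prop:dividend_strategies} by specialising to $n=2$, so almost no new work is required. First I would observe that when $n=2$ the sum appearing in that proposition collapses to a single term, namely $\s_{i=1}^{n-1} c_i = c_1$. Consequently, the conclusion of Proposition \ref{prop:dividend_strategies}, that $\s_{i=1}^{n-1} c_i$ is $\mF_0$-measurable, reads in this case simply as: $c_1$ is $\mF_0$-measurable.

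Next I would recall that, since $\mathbb{F}$ is the complete natural filtration of the Brownian motion $W$, the $\sigma$-algebra $\mF_0$ is $\mP$-trivial, so every $\mF_0$-measurable random variable is $\mP$-a.s.\ equal to a constant. Hence the $\mF_0$-measurability of $c_1$ obtained above is precisely the assertion that $c_1$ is deterministic. For the remaining component, the definition of $\mathcal{A}_{(2)}$ already requires $c_0$ to be $\mF_{0}$-measurable (the change point $k=0$ corresponds to time $0$), so $c_0$ is deterministic by the same triviality argument.

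Combining these two observations, any admissible pair $(c_0,c_1)\in\mathcal{A}_{(2)}$ is formed by two deterministic rates, which is exactly the claim. I would emphasise that the only genuinely non-trivial point, the determinism of the a priori $\mF_{T/2}$-measurable quantity $c_1$, has already been settled inside Proposition \ref{prop:dividend_strategies} via the change-of-measure and moment-generating-function uniqueness argument; the corollary is therefore a clean specialisation, and I do not expect any real obstacle beyond carefully identifying $\s_{i=1}^{n-1}c_i$ with $c_1$ in the case $n=2$.
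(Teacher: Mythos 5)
Your proposal is correct and matches the paper exactly: the paper states this corollary as an immediate specialisation of Proposition \ref{prop:dividend_strategies} to $n=2$ (offering no separate proof), which is precisely your argument that $\s_{i=1}^{n-1}c_i$ collapses to $c_1$ and that $\mF_0$-measurability means deterministic by triviality of $\mF_0$.
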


Note that the dividend strategies act solely on the drift and do not affect the volatility. This fact allows to compare different strategies by looking at the surplus ``path by path''.
Another implication is that, in case $n=2$, the optimal dividend strategy is completely decided at time $t=0$; meaning that once the dividend rate $c_0$, to be valid in $[0, T/2]$, is decided, then $c_1$ is also uniquely determined at time $t=0$ so that the final distribution can be achieved. We will see in the reminder of the section that the optimal strategy is deterministic also for $n>2$.

Let now $r>0$ be the preference rate of the insurer. The return function corresponding to a strategy $\mathbf{c}=(c_0,...,c_{n-1}) \in \mathcal{A}_{(n)}$ is
\[
V^{\mathbf{c}}(x):= \mE_x\left[\sum_{k=0}^{n-1}  \frac{c_k}{r} e^{-r \frac{kT}{n}} \left(1-e^{-r \frac{T}{n}}\right)\right]\;.
\]
Note, that the dependence on the initial capital $x$ is in this setting purely nominal. As stressed before,  we do not stop our considerations at the time of ruin. The strategy will depend solely on the parameters of the surplus process and the target distribution.
\\The target of the insurance company is to find a strategy $\mathbf{c^*}=(c_0^*,...,c_{n-1}^*)\in \mathcal{A}_{(n)}$ leading to
\begin{equation}
V^{\mathbf{c^*}}(x)=\max\limits_{\mathbf{c}\in \mathcal{A}_{(n)}}  \mE_x\left[\sum_{k=0}^{n-1}  \frac{c_k}{r} e^{-r \frac{kT}{n}} \left(1-e^{-r \frac{T}{n}}\right)\right]. \label{problem:dividend_max}
\end{equation}

To analyse Problem \eqref{problem:dividend_max}, we start with the case of two periods, i.e.\ $n=2$.

\subsection{A 2-period model \label{2period}}
Suppose that the insurance company is allowed to update its dividend strategy only once, at time $T/2$.
Due to Corollary \ref{lemma:div1}, we get that the set of admissible dividend strategies $\mathcal A_{(2)}$ consists of all deterministic pairs $\mathbf{c}=(c_0,c_1)$ with $c_0,c_1\in[0,\xi]$, and such that
\[
\bar\mu-(c_0+c_1)/2=M.
\]
As a direct consequence of the fact that $c_0, c_1\in \mathcal{F}_0$, it must also hold that $\bar\sigma^2=\delta^2$, otherwise the target distribution would not be reachable.
In the next step, we investigate how to determine the optimal strategy.

\begin{proposition}
The optimal strategy $\mathbf{c}^*=(c_0^*, c_1^*)$ is given by
\begin{align*}
&c_0^*=\xi\w2(\bar\mu-M)
\\
&c_1^*=\begin{cases}
0 &\mbox{ if $2\bar\mu-2M\le \xi$}\\
2\bar\mu-2M-\xi &\mbox{ if $2\bar\mu-2M> \xi$}
\end{cases}\;.
\end{align*}
\end{proposition}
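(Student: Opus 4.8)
The plan is to exploit Corollary~\ref{lemma:div1} together with the mean constraint to reduce the optimisation to an elementary one-dimensional linear programme. By Corollary~\ref{lemma:div1}, any $\mathbf{c}\in\mathcal{A}_{(2)}$ is a deterministic pair $(c_0,c_1)$ with $c_0,c_1\in[0,\xi]$, and the terminal-mean constraint $\bar\mu-(c_0+c_1)/2=M$ forces $c_0+c_1=2(\bar\mu-M)=:S$. Lemma~\ref{lemma:rangeM} gives $0\le S\le 2\xi$, so the feasible set is non-empty. Since the expectation now acts on deterministic quantities, I would write the return function explicitly for $n=2$ as
\[
V^{\mathbf{c}}(x)=\frac{1}{r}\bigl(1-e^{-rT/2}\bigr)\Bigl(c_0+c_1\,e^{-rT/2}\Bigr),
\]
so that maximising $V^{\mathbf{c}}$ over $\mathcal{A}_{(2)}$ is equivalent to maximising the linear functional $c_0+c_1e^{-rT/2}$ subject to the above constraints, the strictly positive prefactor $\tfrac1r(1-e^{-rT/2})$ being irrelevant for the argmax.

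The key step is a monotonicity observation. Substituting $c_1=S-c_0$ yields
\[
c_0+c_1e^{-rT/2}=c_0\bigl(1-e^{-rT/2}\bigr)+S\,e^{-rT/2},
\]
which, because $r,T>0$ imply $1-e^{-rT/2}>0$, is strictly increasing in $c_0$. This is precisely the economic content of discounting: dividends paid on $[0,T/2]$ are discounted less than those paid on $[T/2,T]$, so it is optimal to front-load payments and choose $c_0$ as large as the constraints permit.

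It then remains to identify the feasible range of $c_0$. The two box constraints $0\le c_0\le\xi$ and $0\le c_1=S-c_0\le\xi$ combine to $\max(0,S-\xi)\le c_0\le\min(\xi,S)$, and this interval is non-empty exactly by the bound $0\le S\le 2\xi$ inherited from Lemma~\ref{lemma:rangeM}. The maximiser is therefore $c_0^*=\min(\xi,S)=\xi\w 2(\bar\mu-M)$, with $c_1^*=S-c_0^*$; distinguishing the cases $S\le\xi$ and $S>\xi$ (that is, $2\bar\mu-2M\le\xi$ and $2\bar\mu-2M>\xi$) reproduces the claimed closed form. I do not expect a genuine obstacle: once Corollary~\ref{lemma:div1} collapses the admissible set to a segment, the problem is a linear programme on an interval, and the only point requiring care is the non-emptiness of the derived interval, which Lemma~\ref{lemma:rangeM} supplies.
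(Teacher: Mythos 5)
Your proof is correct and follows essentially the same route as the paper: reduce to deterministic pairs via Corollary~\ref{lemma:div1}, use the mean constraint $c_0+c_1=2(\bar\mu-M)$, and observe that discounting makes the objective increasing in $c_0$, so $c_0$ is taken as large as the box constraints allow. Your version is slightly more explicit (substituting the constraint and checking non-emptiness of the feasible interval for $c_0$), but the underlying argument is identical to the paper's.
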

\begin{proof}
We consider the problem
\begin{equation}\label{eq:opt_problem}
\max_{(c_0, c_1)\in \mathcal{A}_{(2)}} \frac{c_0 T}{r} \left(1-e^{-r T/2}\right)+\frac{c_1 T}{r} e^{-r T/2}\left(1-e^{-r T/2}\right)\;.
\end{equation}
It is easy to see that, for $r>0$, the discounting coefficient in the first period, $1-e^{-r T/2}$, is larger than in the second period, $e^{-r T/2}(1-e^{-r T/2})$. Therefore, to maximise the discounted dividends, $c_0$ must be chosen as big as possible. Taking into account that  $0\leq c_0\leq \xi$ and that $c_0+c_1=2(\bar \mu - M)$,
we get that $c_0=\min \left(\xi, 2(\bar\mu-M)\right) $, and consequently, $c_1=2(\bar\mu-M)-\xi$ if $c_0=\xi$ and $c_1=0$ if $c_0=2(\bar\mu-M)$.
\end{proof}

To summarise the result, in a two-period setting, the optimal dividend strategy pays dividends at the maximum rate in the first period, and then adjusts the strategy to achieve the target distribution in the second period. Such behaviour is justified by the effect of discounting which has a larger impact in the time interval $[T/2, T]$.

\subsection{An $n$-period model}
We now extend our analysis to an $n$-period framework. That is, the dividend strategy can be adjusted $n$ times in the interval $[0,T]$. Recall that, according to Proposition \ref{prop:dividend_strategies},  strategies are not necessarily deterministic, but the sum of dividend rates is.

To better explain the mechanism for the computation of the optimal dividend strategy, we consider an example with $n=3$.
\begin{example}\label{example:3period_dividends}
Let $n=3$ and let $\mathbf{c}=(c_0, c_1, c_2)$ be an admissible strategy. The expected discounted total dividends are given by
\[
\frac{c_0}{r}(1-e^{-rT/3})
+\frac{e^{-rT/3}(1-e^{-rT/3})}r\mE\Big[c_1+c_2e^{-rT/3}\Big]\;.
\]
We easily see that, that due to discounting ($r>0$), the strategy $c_0$ to be applied in the first period has a larger weight than the others, hence, as in the two period model,  it would be optimal to choose it the largest possible.
Taking into account that $x+MT=x+\bar\mu T-\frac{(c_0 +\mE[c_1+c_2])T}{3}$, and that $c_k\in [0, \xi]$ for $k=0, \dots, 2$,
we have that
\begin{align*}
&c_0=\begin{cases}
3(\bar\mu-M) &\mbox{ if $3(\bar\mu-M)\le \xi$}\\
\xi &\mbox{ if $3(\bar\mu-M)> \xi$}
\end{cases}\;,
\end{align*}
equivalently, $c_0=\min (3\bar\mu-3M, \xi)$. Now we move to the choice of $c_1, c_2$. After choosing $c_0$ we get that $\mE[c_1+c_2]=c_1+c_2=\max(0, 3(\bar\mu-M)-\xi)$, according to Proposition \ref{prop:dividend_strategies}. If $c_0=3\bar\mu-3M$, since $c_1$ and $c_2$ are nonnegative, it holds that  $c_1=c_2=0$. If instead, $c_0=\xi$, using the same argument like for $c_0$, we choose $c_1$ and $c_2$ so that $c_1$ is the largest possible value according to the constraints, i.e. $c_1=\min(3(\bar\mu-M)-\xi, \xi)$, and $c_2=\max(3(\bar\mu-M)-2\xi, 0)$. Put in other words, if $2\xi\le 3\bar\mu-3M< 3\xi$, then $c_0=c_1=\xi$ and $c_2=3\bar\mu-3M-2\xi$. If $\xi<3\bar\mu-3M< 2\xi$, at time $T/3$ we determine both $c_1$ and $c_2$, depending on the current surplus so that
$$\bar\mu T-(c_1+c_2)T/3= MT+\xi T/3.$$
We stress that because $c_1+c_2$ must be deterministic, we immediately get that $c_2$ is $\mF_{T/3}$ measurable. That means, once $c_1$ is found, then $c_2$ is also determined, so that the constraint on the distribution is satisfied. Moreover, the value $3(\bar\mu-M)-\xi$ is the biggest possible choice for $c_1$.
\\
The deterministic strategy $\mathbf{c^*}=(c_0,c_1,c_2)$ where
\begin{equation}\label{eq:dividends_3period}
\begin{cases}c_0^*=\min(\xi, 3 (\bar\mu - M))&\\
c_1^*= \max\big(\min (\xi, 3 (\bar\mu - M)-\xi ),0\big)&\\
c_2^*= \max (3 (\bar\mu - M)-2\xi, 0)&
\end{cases}
\end{equation}
fulfils all necessary conditions.

Next, we show that we cannot find a different, possibly stochastic, strategy with a higher expected discounted dividends value, meaning that the optimal strategy is indeed deterministic.

\smallskip

Let $\mathbf{c^*}=(c_0^*, c_1^*, c_2^*)$ be the strategy in \eqref{eq:dividends_3period} and let $\tilde{\mathbf{c}}=(\tilde c_0,\tilde c_1,\tilde c_2)\in\mathcal{A}_{(3)}$ be an arbitrary admissible strategy, i.e.\ such that $\tilde c_m\in [0,\xi]$ for $m=0,1,2$, and $X^{\tilde{\mathbf{c}}}_T\sim N(x+MT, \delta^2 T)$.
\\Then, there exist two random variables $d_1,d_2$ such that $\mE[d_1+d_2]=c_0^*-\tilde c_0\ge 0$, because $c_0^*$ is the largest possible dividend rate,  and $\tilde c_1=c_1^*+d_1$, $\tilde c_2=c_2^*+d_2$. It holds that
\begin{align*}
V^{\tilde{\mathbf{c}}}(x)&=\tilde c_0(1-e^{-\frac{r T}3})+(1-e^{-\frac{r T}3})e^{-\frac{r T}3}\mE[\tilde c_1 +\tilde c_2 e^{-\frac{r T}3}]
\\&=V^{\mathbf{c}^*}(x)-(c_0^*-\tilde c_0)(1-e^{-\frac{r T}3})+(1-e^{-\frac{r T}3})e^{-\frac{r T}3}\mE[d_1 +d_2 e^{-\frac{r T}3}]
\\&=V^{\mathbf{c}^*}(x)-(1-e^{-\frac{r T}3}) \left(\mE[d_1+d_2] - e^{-\frac{r T}3}\mE[d_1 +d_2 e^{-\frac{r T}3}]\right)
\end{align*}
where in the last equality we have used the fact that $(c_0^*-\tilde c_0)=\mE[d_1+d_2]$.
\\
If $\mE[d_2]< 0$, then, $\mE[c_2^*]>\mE[\tilde c_2]\ge 0$; hence, necessarily  $c_2^*=3(\bar \mu-M)-2\xi>0$ and $c_1^*=c_0^*=\xi$. Since $c^*_0+c^*_1+c^*_2=\tilde c_0 + \mE[\tilde c_1] + \mE[\tilde c_2]$ we get that $\tilde c_0+\mE[\tilde c_1]>c_0^*+\mE[c_1^*]=2\xi$ leading to a contradiction.  \smallskip
\\Then, it must hold that $\mE[d_2]\ge 0$. Now we have two cases:
\begin{itemize}
\item[i.] if $\mE[d_1]<-e^{-rT/3}\mE[d_2]$, then it is immediate that $\mE[d_1+d_2] - e^{-\frac{r T}3}\mE[d_1 +d_2 e^{-\frac{r T}3}]>0$ and then $V^{\tilde{\mathbf{c}}}\le V^{ \mathbf{c^*}}$;
\item[ii.] if $\mE[d_1]\ge-e^{-rT/3}\mE[d_2]$, we get that
\[
c_0^*-\tilde c_0=\mE[d_1 +d_2 ]\ge \mE[d_1 +d_2 e^{-\frac{r T}3}] \ge e^{-\frac{r T}3}\mE[d_1 +d_2 e^{-\frac{r T}3}]\;,
\]
which implies that $V^{\tilde{\mathbf{c}}}\le V^{ \mathbf{c^*}}$.
\end{itemize}
To conclude we observe that if $\mE[d_2]>0$ then the inequality is strict and the strategy $(c^*_0, c^*_1, c^*_2)$, is optimal. If $\mE[d_2]=0$ we get that either $\mE[d_1]>0$ in which case the inequality is strict again, or $\mE[d_1]=0$ which corresponds to the case where $\tilde{\mathbf{c}}=\mathbf{c}$.
 \end{example}
{}\hfill $\blacksquare$
\\The above example provides the argument for computing the optimal dividend strategy in an $n$-period framework.

\begin{proposition}
Let $n$ be the number of sub-periods in the interval $[0,T]$ and let
\begin{equation}\label{eq:k_dividends}
\kappa:=\min\{m\ge 0:\; n(\bar\mu-M)<(m+1)\xi\}\;.
\end{equation}
Then, an optimal strategy $\mathbf{c}^*=(c_0^*, c_1^*, \dots, c_{n-1}^* )$ is given by
\begin{equation}
\label{eq:dividen_strategy_n}
\begin{cases}
c_0^*=...=c_{\kappa-1}^*=\xi \;, &
\\
c_{\kappa}^*= n (\bar\mu-M)-\kappa \xi\;,&\\
c_{\kappa+1}^*=...c_{n-1}^*=0\;. &
\end{cases}
\end{equation}
\end{proposition}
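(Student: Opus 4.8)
The plan is to reduce Problem \eqref{problem:dividend_max} to a finite-dimensional linear programme in the \emph{expected} dividend rates and then to solve that programme by a greedy exchange argument, exactly as anticipated in Example \ref{example:3period_dividends}. First I would rewrite the objective by factoring out the common discount factor, so that
\[
V^{\mathbf{c}}(x)=\frac{1-e^{-rT/n}}{r}\sum_{k=0}^{n-1}e^{-rkT/n}\,\mE[c_k].
\]
This shows that $V^{\mathbf{c}}(x)$ depends on the strategy $\mathbf{c}$ only through the vector of expectations $a_k:=\mE[c_k]$, $k=0,\dots,n-1$, which is the observation that makes the whole argument work.

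Next I would identify the constraints that the vector $\mathbf{a}=(a_0,\dots,a_{n-1})$ must satisfy for \emph{any} admissible strategy. The pointwise bounds $0\le c_k\le\xi$ give $a_k\in[0,\xi]$. By Proposition \ref{prop:dividend_strategies} the sum $\sum_{k=0}^{n-1}c_k$ is deterministic, and matching the target mean $x+MT$ forces its value to be $n(\bar\mu-M)$ (this is the mean computation already used in Lemma \ref{lemma:rangeM}); taking expectations yields the budget constraint $\sum_{k=0}^{n-1}a_k=n(\bar\mu-M)$. Consequently the value of every admissible strategy is bounded above by the maximum of the linear functional $\mathbf{a}\mapsto\sum_k e^{-rkT/n}a_k$ over the compact polytope $P:=\{\,a_k\in[0,\xi],\ \sum_k a_k=n(\bar\mu-M)\,\}$. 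Conversely, any \emph{deterministic} strategy lying in $P$ is admissible, because determinism automatically makes $\sum_k c_k$ constant and renders $X_T^{\mathbf{c}}$ normal with the required mean and variance (recall $\delta=\bar\sigma$). Hence it suffices to maximise the linear functional over $P$ and then realise the maximiser by a deterministic strategy.

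I would then solve this linear programme. Since $r>0$, the weights $e^{-rkT/n}$ are strictly decreasing in $k$, so the objective rewards placing mass at the earliest indices. A standard exchange argument gives the unique maximiser: if a feasible $\mathbf{a}$ had $a_j<\xi$ for some $j$ that is smaller than an index $i$ with $a_i>0$, then shifting a small amount of mass from $a_i$ to $a_j$ preserves feasibility and strictly increases the objective. Iterating this ``fill the earliest periods to the cap'' procedure yields the water-filling vector $a_0=\dots=a_{\kappa-1}=\xi$, $a_\kappa=n(\bar\mu-M)-\kappa\xi$, $a_{\kappa+1}=\dots=a_{n-1}=0$. The definition of $\kappa$ in \eqref{eq:k_dividends} as the smallest $m$ with $n(\bar\mu-M)<(m+1)\xi$ is exactly what guarantees $a_\kappa\in[0,\xi)$, so this vector is feasible and coincides with \eqref{eq:dividen_strategy_n}. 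The deterministic strategy $\mathbf{c}^*$ of \eqref{eq:dividen_strategy_n} realises this optimal expectation vector, attains the upper bound, and is therefore optimal.

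The conceptual crux is not the combinatorial optimisation but the reduction: establishing that the problem over (possibly stochastic) control sequences collapses to the linear programme over $P$, and that a single deterministic strategy is simultaneously feasible and optimal. Once the objective is seen to depend only on the means $a_k$ and the admissibility set is pinned down to $P$, the exchange argument is routine. A minor point to dispatch is the boundary case $M=\bar\mu-\xi$, where the budget forces every $a_k=\xi$ and one interprets $\kappa$ as $n$; this is checked directly.
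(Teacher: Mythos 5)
Your proof is correct, and its mathematical core---linearity of the objective in the expected rates, the budget identity $\sum_{k}\mE[c_k]=n(\bar\mu-M)$ forced by the distributional constraint, and an exchange argument driven by the strictly decreasing weights $e^{-rkT/n}$---is the same as the paper's. The difference is architectural. The paper argues by direct domination: it takes an arbitrary admissible $\tilde{\mathbf{c}}$, writes $\tilde c_m=c_m^*+d_m$ with $\sum_{m\ge 1}\mE[d_m]=c_0^*-\tilde c_0\ge 0$, notes the sign constraints $\mE[d_m]\le 0$ for $m<\kappa$ and $\mE[d_m]\ge 0$ for $m>\kappa$, and then shows through a monotonicity argument (the function $t\mapsto\sum_{m}e^{rt(\kappa-m)/n}\mE[d_m]$ is decreasing) that the perturbation cannot increase the value, with strict inequality unless $\tilde{\mathbf{c}}\equiv\mathbf{c}^*$. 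You instead interpose a relaxation: every admissible value is bounded by the maximum of the linear functional over the polytope $P$, that finite-dimensional programme is solved by greedy water-filling, and the bound is then attained by the deterministic strategy $\mathbf{c}^*$. Your route buys modularity and transparency: the reduction to expectations, the optimisation, and the attainment/admissibility step (where $\delta=\bar\sigma$ is needed) are cleanly separated, and the boundary case $M=\bar\mu-\xi$ (where $\kappa$ would equal $n$) is isolated explicitly, which the paper handles only loosely. The paper's route, in exchange, needs no separate attainment argument and delivers the strict inequality---hence uniqueness of the optimiser within the admissible class---as a by-product of the same computation.
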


\begin{proof}
Assume first $\kappa=n-1$, then obviously the optimal strategy is $(\xi,...,\xi)$.\medskip
\\Let now $\kappa<n-1$ and let $\tilde{\mathbf{c}}=(\tilde c_0,...,\tilde c_{n-1})$ be an admissible strategy.
Like in Example \ref{example:3period_dividends}, there exist $d_1,...,d_{n-1}$ such that $\tilde c_m=c_m^*+d_m$ for $m\in\{1,...,n-1\}$ and $\s_{m=1}^{n-1}\mE[d_m]=c_0^*-\tilde c_0\ge 0$. Then we have that
\[
V^{\tilde{\mathbf{c}}}(x)=V^{{\mathbf{c^*}}}(x)-(c_0^*-\tilde c_0)(1-e^{-rT/n})+(1-e^{-rT/n})\s_{m=1}^{n-1} e^{-rTm/n} \mE[d_m]\;.
\]
 Note that since $c^*_m=\xi$ for all $m\le \kappa-1$, and $c^*_m=0$ for all $m>\kappa+1$, it must hold $d_m\le 0$ for $m\le \kappa-1$ and $d_m\ge 0$ for $m\ge \kappa+1$.
\\Now we observe that, the function $t \to \s_{m=1}^{n-1} e^{r t \frac{\kappa-m}{n}}\mE[d_m]$ is decreasing, and hence it attains its maximum at $t=0$, i.e. $\s_{m=1}^{n-1} \mE[d_m]\ge\s_{m=1}^{n-1} e^{rT\frac{\kappa-m}{n}}\mE[d_m]$ . Therefore, we conclude that
\[
c_0^*-\tilde c_0=\s_{m=1}^{n-1} \mE[d_m]\ge e^{-rT\frac{\kappa}{n}}\s_{m=1}^{n-1} \mE[d_m]\ge e^{-rT\frac{\kappa}{n}}\s_{m=1}^{n-1} e^{rT\frac{\kappa-m}{n}}\mE[d_m]\;.
\]
The strict inequality holds true if there is at least one $m$ with $\mE[d_m]\neq 0$. If instead $\mE[d_m]=0$ for all $m=1, \dots, n-1$, then strategies $
\tilde{\mathbf{c}}$ and $\mathbf{c}^*$ coincide, i.e. in particular $d_m=0$ almost surely for all $m=0, \dots, n-1$. This leads to $V^{\tilde{\mathbf{c}}}< V^{{\mathbf{c^*}}}$ if $\tilde{\mathbf{c}}\not \equiv {\mathbf{c^*}}$.
\end{proof}

\begin{remark}[Continuous time]
This procedure allows to extend the setting to continuous time.
\\We denote by $\mathcal{A}_{(\infty)}$ the set of admissible strategies, consisting of the $\mathbb{F}$-adapted processes $\mathbf{c}=(c_s)_{s\in[0,T]}$ with $0\le c_s\le \xi$ and $X_T^{\mathbf{c}}$ from \eqref{eq:surplus} normally distributed with mean $x+MT$ and variance $\delta^2T$. Letting $n\to\infty$ in the $n$-period models, the optimal strategies as given in \eqref{eq:dividen_strategy_n} converge to a deterministic strategy in continuous time:
\[
c^*_s=\begin{cases}\xi&\mbox{: $t\le T\w t^*$},\\
0&\mbox{: $t> T\w t^*$},
\end{cases}
\]
where $t^*=(\bar \mu-M)T/\xi$. We assume $t^*<T$.
\\Let $\tilde{\mathbf{c}}=(\tilde c_s)_{s\in[0,T]}$ be an admissible strategy and define $d_s:=\tilde c_s-c^*_s$. Since we would like to achieve the same final distribution with strategies $\mathbf{c}^*$ and $\tilde{\mathbf{c}}$, it must hold that $\mE[\int_0^T d_s]=0$. Moreover, it is clear that $d_s\le 0$ for $s\le t^*$, $d_s\ge 0$ for $s>t^*$. As for the $n$-period models we get
\begin{align*}
V^{\tilde{\mathbf{c}}}(x)&=\mE_x\Big[\int_0^T e^{-rs}\tilde c_s\md s\Big]=V^{\mathbf{c^*}}(x)+\mE_x\Big[\int_0^T e^{-rs}d_s\md s\Big]
\\&=V^{\mathbf{c^*}}(x)+e^{-rt^*}\mE_x\Big[\int_0^T e^{-r(s-t^*)}d_s\md s\Big]
\\&\le V^{\mathbf{c^*}}(x)+e^{-rt^*}\mE_x\Big[\int_0^T d_s\md s\Big]=V^{\mathbf{c^*}}(x)\;.
\end{align*}
A strict inequality holds true if $\mE[d_s] \neq 0$ for all  $s\in \mathcal{T}$ where $\mathcal{T}\subseteq [0,T]$ is a Lebesgue measurable non-zero set.
\\Therefore, in continuous time it is optimal to pay on the maximal rate as long as possible, and to pay nothing afterwards.
{}\hfill $\blacksquare$
\end{remark}

Note, that we have only considered the case of dividend rates. However, it is also possible to allow for lump sum payments.
Then, because $r>0$ it is clear that one should pay the amount $(\bar \mu-M)T$ directly at time zero in both discrete and continuous time settings.

Considering the setting with dividend rates may be more preferable for reputational reasons.
Indeed, distributing the dividends over the whole period $[0,T]$ rather than paying a lump sum at the beginning of the period, may give a better impression to shareholders. This is guaranteed in our model by the upper bound $\xi$ on the admissible dividend rates. The value of $\xi$ is a management decision: in our setting it has to be small enough to distribute dividends over the whole period, and large enough to achieve the target distribution.

\section{Dividends Minimising the Ruin Probability \label{div:ruin}}
At first glance, the title of this sections sounds controversial. Indeed, paying dividends increases the probability of ruin, and in many settings the optimal dividend strategy even leads to a certain ruin. However, the constraint put on the terminal distribution allows to find a non-zero dividend strategy that minimises the ruin probability over the set of admissible strategies.
\\We consider again an insurance company who pays dividends and aims to achieve a target distribution of the terminal surplus at time $T$. However, we now assume that the objective of the insurer is to minimise the ruin probability.

We consider the same setting like in Section \ref{sec:max_dividend} with a surplus, after dividends, described by equation \eqref{eq:surplus}. We recall that the set of achievable target means is given by $\bar\mu-\xi<M<\bar\mu$ (see Lemma \ref{lemma:rangeM}) and that the set of admissible strategies $\mathcal{A}_{(n)}$ is the set of all strategies $\mathbf{c}=(c_0, \dots, c_{n-1})$, where $\sum_{m=1}^{n-1}c_m$ is $\mathcal{F}_0$-measurable (see Proposition \ref{prop:dividend_strategies}), and $c_0+\sum_{m=1}^{n-1}c_m=n(\bar \mu - M)$.

The goal of the insurance company is to minimise the ruin probability, which is given by

\begin{equation}
\label{problem:dividend_min}
\min \mP[\inf\limits_{0\le t\le T}X^{\mathbf{c}}_t<0]
\end{equation}
over all admissible dividend strategies $\mathbf{c}\in \mathcal{A}_{(n)}$.

Like in Section \ref{sec:max_dividend}, we begin by addressing Problem \eqref{problem:dividend_min} in a two-period framework.
\subsection{A 2-period model}
The set of admissible strategies is denoted by $\mathcal A_{(2)}$, given in Section \ref{2period}. That is all admissible strategies are of the form $\mathbf{c}=(c_0,c_1)$ with $c_0,c_1\in[0,\xi]$ deterministic (see Corollary \ref{lemma:div1}) and $c_0+c_1=2(\bar \mu -M)$.
We target to minimise the ruin probability in the time interval $[0,T]$, i.e.\
\[
p(\mathbf{c},x):=\mP\left[\inf\limits_{0\le t\le T}X^{\mathbf{c}}_t<0\right]\;,
\]
over all $\mathbf{c} \in \mathcal A_{(2)}$. Note that differently than in Section \ref{sec:max_dividend} the dependence on the initial capital $x$ is crucial in this setting.
\begin{proposition}\label{prop:dividend_ruin2}
Let $\mathbf{c}=(c_0, c_1)$ and $\tilde{\mathbf{c}}=(\tilde c_0, \tilde c_1)$ be two admissible strategies, i.e.\ $X^{\mathbf{c}}, X^{\tilde{\mathbf{c}}}\sim N(x+MT, \delta^2T)$. We assume that $c_0>\tilde c_0$. Then, $\tilde{\mathbf{c}}$ {\em is better than} $\mathbf{c}$, in the sense that $$p(\tilde{\mathbf{c}}, x)<p(\mathbf{c},x).$$
\end{proposition}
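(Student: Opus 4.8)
The plan is to compare the two surplus trajectories directly, exploiting that both strategies are deterministic and act only on the drift, and then to upgrade the resulting weak inequality to a strict one. First I would realise both surpluses on the \emph{same} Brownian motion $W$. Since $\mathbf{c},\tilde{\mathbf{c}}\in\mathcal{A}_{(2)}$ are deterministic with $c_0+c_1=\tilde c_0+\tilde c_1=2(\bar\mu-M)$, writing $\Delta:=c_0-\tilde c_0>0$ forces $\tilde c_1-c_1=\Delta$. Computing the deterministic difference $g(t):=X^{\tilde{\mathbf{c}}}_t-X^{\mathbf{c}}_t$ gives $g(t)=\Delta t$ for $t\le T/2$ and $g(t)=\Delta(T-t)$ for $t>T/2$, that is $g(t)=\Delta\,\big(t\w(T-t)\big)\ge 0$, with $g>0$ on $(0,T)$ and $g(0)=g(T)=0$. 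Hence the more front-loaded strategy $\mathbf{c}$ has the pointwise smaller surplus: $X^{\tilde{\mathbf{c}}}_t\ge X^{\mathbf{c}}_t$ for every $t$, strictly on $(0,T)$.

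Taking the infimum over $[0,T]$ pathwise yields $\inf_{0\le t\le T}X^{\tilde{\mathbf{c}}}_t\ge\inf_{0\le t\le T}X^{\mathbf{c}}_t$ on every $\omega$. Writing $R:=\{\inf_{0\le t\le T}X^{\mathbf{c}}_t<0\}$ and $\tilde R:=\{\inf_{0\le t\le T}X^{\tilde{\mathbf{c}}}_t<0\}$, this gives $\tilde R\subseteq R$ and therefore $p(\tilde{\mathbf{c}},x)\le p(\mathbf{c},x)$ at once. The entire difficulty is to show that this inclusion is strict on a set of positive measure, i.e.\ that $\mP(R\setminus\tilde R)>0$.

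For the strict inequality I would isolate the event
\[
E:=\Big\{\inf_{0\le t\le T}X^{\tilde{\mathbf{c}}}_t>0\Big\}\cap\Big\{X^{\tilde{\mathbf{c}}}_{T/2}<g(T/2)\Big\}.
\]
On $E$ the higher process never becomes negative, so $\tilde{\mathbf{c}}$ survives, while $X^{\mathbf{c}}_{T/2}=X^{\tilde{\mathbf{c}}}_{T/2}-g(T/2)<0$, so $\mathbf{c}$ ruins; thus $E\subseteq R\setminus\tilde R$. It then remains to verify $\mP(E)>0$. Here I would invoke the support theorem for the nondegenerate diffusion $X^{\tilde{\mathbf{c}}}$ (constant volatility $\bar\sigma>0$ and deterministic drift), by which every sup-norm neighbourhood of any continuous $\psi$ with $\psi(0)=x$ carries positive probability. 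Choosing $\psi$ to remain bounded away from $0$ on $[0,T]$ with $\psi(T/2)$ strictly between $0$ and $g(T/2)=\Delta T/2$, a sufficiently thin tube around $\psi$ is contained in $E$; this gives $\mP(E)>0$ and hence $p(\tilde{\mathbf{c}},x)<p(\mathbf{c},x)$.

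I expect the main obstacle to be precisely this last positivity step: the pathwise comparison is routine, but excluding equality of the two ruin probabilities requires exhibiting an explicit positive-probability family of trajectories separating the two ruin events, and this is exactly where nondegeneracy of the diffusion enters (through the support theorem, or equivalently a Cameron--Martin tube estimate). I would also flag that strictness genuinely requires $x>0$: when $x=0$ both surpluses take negative values in every right-neighbourhood of $0$ almost surely, so that $p(\tilde{\mathbf{c}},0)=p(\mathbf{c},0)=1$ and the claimed strict inequality can only hold for strictly positive initial capital.
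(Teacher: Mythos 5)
Your proof is correct, and its first half --- realising both surpluses on the same Brownian motion and computing the deterministic gap $g(t)=\Delta\,\big(t\wedge(T-t)\big)$ --- is exactly the paper's argument: the paper writes $X^{\mathbf c}_s=X^{\tilde{\mathbf c}}_s+(\tilde c_0-c_0)s$ for $s\le T/2$ and $X^{\mathbf c}_s=X^{\tilde{\mathbf c}}_s+(c_0-\tilde c_0)(s-T)$ for $s>T/2$, which is the same pathwise domination. Where you genuinely differ is the strictness step. The paper simply asserts the pathwise strict inequality $\inf_{0\le t\le T}X^{\mathbf c}_t<\inf_{0\le t\le T}X^{\tilde{\mathbf c}}_t$ and concludes; this is not quite right, since on the positive-probability event that $\inf_{0\le t\le T}X^{\mathbf c}_t$ is attained at $t=T$ (where $g$ vanishes) the two infima coincide --- indeed there one has $\inf_t X^{\tilde{\mathbf c}}_t\le X^{\tilde{\mathbf c}}_T=X^{\mathbf c}_T=\inf_t X^{\mathbf c}_t\le\inf_t X^{\tilde{\mathbf c}}_t$ --- and even an almost-sure strict inequality of the infima would not by itself convert the inclusion of ruin events into a strict inequality of their probabilities. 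Your construction of the separating event $E$, together with the tube (support-theorem) estimate $\mP(E)>0$, supplies precisely the missing ingredient: an explicit positive-probability set on which $\mathbf c$ ruins while $\tilde{\mathbf c}$ survives, which is what $p(\tilde{\mathbf c},x)<p(\mathbf c,x)$ actually requires. Your closing remark that strictness needs $x>0$ is also consistent with the paper, whose conclusion is phrased ``for all $x>0$'' (at $x=0$ both ruin probabilities equal $1$). In short: same comparison idea, but your version is the complete proof; the paper's strictness claim, as written, has a gap that your support-theorem step closes.
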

\begin{proof}
We first observe that at time $T$, both strategies $\mathbf{c}$ and $\tilde{\mathbf{c}}$ lead to the same distribution of the final surplus,
i.e.\ $X^{\mathbf{c}}_T,X^{\tilde{\mathbf{c}}}_T\sim N(x+MT, \delta^2T)$. Then, we have
\begin{align*}
\inf\limits_{0\le t\le T}X^{\mathbf{c}}_t
&=\inf\limits_{0\le s\le T}
\begin{cases}
x+(\bar\mu-c_0) s+\bar\sigma W_s &\mbox{: if $s\le \frac{T}{2}$}\\
x+(\bar\mu-c_0) \frac{T}{2}+\bar\sigma W_{s}+(\bar\mu-c_1)(s-\frac{T}{2})&\mbox{: if $s\in( \frac{T}{2},T]$}
\end{cases}
\\&=\inf\limits_{0\le s\le T} \begin{cases}
x+(\bar\mu-c_0) s+\bar\sigma W_s &\mbox{: if $s\le \frac{T}{2}$}\\
x+\bar\mu s+c_0(s-T)+\bar\sigma W_{s}-2(\bar \mu-M)(s-\frac{T}{2})&\mbox{: if $s\in( \frac{T}{2},T]$}
\end{cases}
\\&=\inf\limits_{0\le s\le T}
\begin{cases}
X^{\tilde{\mathbf{c}}}_s+(\tilde c_0-c_0)s &\mbox{: if } s\le \frac{T}{2}\\
X^{\tilde{\mathbf{c}}}_s+(c_0-\tilde c_0)(s-T) &\mbox{: if } s\in( \frac{T}{2},T]
\end{cases}
\\&< \inf\limits_{0\le t\le T}X^{\tilde{\mathbf{c}}}_t\;.
\end{align*}
Therefore, for all $x>0$ we get that $p(\mathbf{c}, x)>p(\tilde{\mathbf{c}},x)$.
\end{proof}

As a consequence of Proposition \ref{prop:dividend_ruin2}, we get that $c_0$ should be chosen as the smallest possible value.
This leads to the following result.

\begin{corollary}\label{cor:dividend_ruin2}
In a two-period framework, the ruin minimising dividend strategy is $\mathbf{c^*}=(c_0^*, c_1^*)$ where
\begin{align*}
\begin{cases}c_0^*=\max(2(\bar\mu-M)-\xi, 0)\;,&\\
c_{1}^*=\min( \xi, 2(\bar\mu-M))\;.&
\end{cases}
\end{align*}
\end{corollary}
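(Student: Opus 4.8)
The plan is to apply Proposition~\ref{prop:dividend_ruin2} directly. That proposition establishes a strict monotonicity: among admissible two-period strategies, the one with the smaller first-period dividend rate $c_0$ yields a strictly smaller ruin probability. Consequently, minimising $p(\mathbf{c},x)$ over $\mathcal{A}_{(2)}$ reduces to choosing $c_0$ as small as the admissibility constraints permit.

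First I would recall the two constraints defining $\mathcal{A}_{(2)}$: both rates lie in $[0,\xi]$, and the mean constraint forces $c_0+c_1=2(\bar\mu-M)$. Given the monotonicity result, the optimiser is the admissible strategy minimising $c_0$. Since $c_1\le\xi$ and $c_1=2(\bar\mu-M)-c_0$, the lower bound on $c_0$ is $c_0\ge 2(\bar\mu-M)-\xi$; combined with $c_0\ge 0$ this yields $c_0^*=\max\bigl(2(\bar\mu-M)-\xi,\,0\bigr)$. The companion rate is then $c_1^*=2(\bar\mu-M)-c_0^*=\min\bigl(\xi,\,2(\bar\mu-M)\bigr)$, which one checks by treating the two cases $2(\bar\mu-M)\le\xi$ and $2(\bar\mu-M)>\xi$ separately and verifying $c_1^*\in[0,\xi]$ in each.

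The one point deserving care is the strictness and well-definedness of the argument. Proposition~\ref{prop:dividend_ruin2} gives a strict inequality whenever $c_0>\tilde c_0$, so the minimiser is unique and is precisely the feasible strategy with the least $c_0$; I would note explicitly that the feasible set for $c_0$ is a (possibly degenerate) closed interval $[\max(2(\bar\mu-M)-\xi,0),\,\min(\xi,2(\bar\mu-M))]$, so its minimum is attained. No genuine obstacle arises here: the heavy lifting — the path-comparison establishing that a smaller $c_0$ strictly lowers the ruin probability — has already been carried out in Proposition~\ref{prop:dividend_ruin2}, and this corollary is simply the optimisation of a scalar subject to box and affine constraints.
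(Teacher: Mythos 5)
Your proposal is correct and follows essentially the same route as the paper: invoke Proposition~\ref{prop:dividend_ruin2} to reduce the problem to minimising $c_0$ over the feasible interval determined by $c_0,c_1\in[0,\xi]$ and $c_0+c_1=2(\bar\mu-M)$, which yields $c_0^*=\max\bigl(2(\bar\mu-M)-\xi,0\bigr)$ and $c_1^*=\min\bigl(\xi,2(\bar\mu-M)\bigr)$. Your added remarks on attainment of the minimum and the case check for $c_1^*$ are fine elaborations of the same argument.
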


\subsection{An n-period model}
The extension to $n$-periods is obtained by replicating the reasoning of Proposition \ref{prop:dividend_ruin2} and Corollary \ref{cor:dividend_ruin2}.

\begin{proposition}\label{prop:dividend_ruinn}
Let $k:=\min\{m\ge 0:\; n(\bar\mu-M)<(m+1)\xi\}$. Then, the ruin minimising dividend strategy $\mathbf{c}=(c_0,...,c_{n-1})$ fulfils
\begin{align*}
\begin{cases}
c_{n-1}=...=c_{n-k}=\xi\;, &
\\c_{n-k-1}=n(\bar\mu-M)-k\xi\; &
\\c_0=...=c_{n-k-2}=0\;. &
\end{cases}
\end{align*}
\end{proposition}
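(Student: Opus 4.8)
The plan is to prove the optimality of the deferred (back-loaded) strategy $\mathbf{c}^*=(c_0^*,\dots,c_{n-1}^*)$ of the statement by the same pathwise comparison used in Proposition~\ref{prop:dividend_ruin2}, the driving observation being that $\mathbf{c}^*$ postpones dividend payments as much as the per-period cap $\xi$ and the total budget allow. Set $\Sigma:=n(\bar\mu-M)$ for the (deterministic) total of the dividend rates, and for any strategy $\mathbf{c}=(c_0,\dots,c_{n-1})$ write
\[
D^{\mathbf{c}}_t:=\s_{m=0}^{n-1}c_m\Big(\tfrac{T(m+1)}{n}\w t-\tfrac{Tm}{n}\w t\Big),
\]
so that $X^{\mathbf{c}}_t=x+\bar\mu t-D^{\mathbf{c}}_t+\bar\sigma W_t$. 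Since two admissible strategies share the same $x+\bar\mu t+\bar\sigma W_t$, comparing their surplus paths reduces to comparing their accumulated-dividend functions $D^{\mathbf{c}}_t$.

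First I would establish the combinatorial core: $\mathbf{c}^*$ minimises every partial sum of the rates. From the definition of $k$ one has $k\xi\le\Sigma<(k+1)\xi$, and a direct check gives
\[
\s_{m=0}^{j}c_m^*=\max\!\big(0,\;\Sigma-(n-1-j)\xi\big),\qquad j=0,\dots,n-1.
\]
For any admissible $\tilde{\mathbf{c}}=(\tilde c_0,\dots,\tilde c_{n-1})$, Proposition~\ref{prop:dividend_strategies} makes $\s_{m=0}^{n-1}\tilde c_m=\Sigma$ hold $\mP$-almost surely, while $0\le\tilde c_m\le\xi$ for each $m$. Hence, pathwise, $\s_{m=0}^{j}\tilde c_m=\Sigma-\s_{m=j+1}^{n-1}\tilde c_m\ge\Sigma-(n-1-j)\xi$ and $\s_{m=0}^{j}\tilde c_m\ge0$, so $\s_{m=0}^{j}\tilde c_m\ge\s_{m=0}^{j}c_m^*$ for every $j$, almost surely.

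Next I would upgrade this to a pathwise ordering of the whole dividend functions and conclude. At a grid point $t=(j+1)T/n$ the partial-sum bound reads $D^{\mathbf{c}^*}_t\le D^{\tilde{\mathbf{c}}}_t$; on each sub-period both $D^{\mathbf{c}^*}$ and $D^{\tilde{\mathbf{c}}}$ are affine in $t$, and an inequality between affine functions that holds at both endpoints persists on the whole interval, so $D^{\mathbf{c}^*}_t\le D^{\tilde{\mathbf{c}}}_t$ for all $t\in[0,T]$, almost surely. Consequently $X^{\mathbf{c}^*}_t\ge X^{\tilde{\mathbf{c}}}_t$ pathwise and $\inf_{0\le t\le T}X^{\mathbf{c}^*}_t\ge\inf_{0\le t\le T}X^{\tilde{\mathbf{c}}}_t$, giving the event inclusion $\{\inf_{t}X^{\mathbf{c}^*}_t<0\}\subseteq\{\inf_{t}X^{\tilde{\mathbf{c}}}_t<0\}$ and therefore $p(\mathbf{c}^*,x)\le p(\tilde{\mathbf{c}},x)$; this is the asserted optimality of $\mathbf{c}^*$.

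The step requiring the most care is the pathwise partial-sum comparison. Because Proposition~\ref{prop:dividend_strategies} only forces the \emph{sum} of the rates to be deterministic --- the individual $\tilde c_m$ with $m\ge1$ may genuinely be random --- one must verify $\s_{m=0}^{j}\tilde c_m\ge\s_{m=0}^{j}c_m^*$ $\omega$-by-$\omega$ and not merely in expectation; it is precisely this almost-sure version that delivers the event inclusion above, rather than only an inequality of expected infima. The within-period affineness argument, though elementary, is also indispensable, since the infimum may be attained in the interior of a sub-period and the grid-point inequalities alone would not control it.
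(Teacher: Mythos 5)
Your proposal is correct and rests on exactly the mechanism the paper uses: because dividends act only on the drift, admissible strategies driven by the same Brownian motion can be compared path by path, and the back-loaded strategy dominates pointwise. The paper does not actually write out an $n$-period proof (it only says to replicate the two-period reasoning of Proposition~\ref{prop:dividend_ruin2} and Corollary~\ref{cor:dividend_ruin2}); your partial-sum lemma $\s_{m=0}^{j}\tilde c_m\ge\max(0,\Sigma-(n-1-j)\xi)$ almost surely, combined with the piecewise-affine interpolation of the accumulated-dividend functions, is a clean and complete way of carrying out that replication in one stroke, yielding $p(\mathbf{c}^*,x)\le p(\tilde{\mathbf{c}},x)$ for every admissible $\tilde{\mathbf{c}}$.
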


\begin{remark}[Continuous time]
Letting $n\to\infty$ will produce the following optimal strategy:  we define $t^*$ as the time that realises $(\bar\mu-M)T=\xi t^*$. Then, the optimal dividend rate is $c_t=0$ for all $0<t<t^*$ and $c_t=\xi$ for $t\ge t^*$.
\end{remark}

Like in the dividend maximisation problem in Section \ref{sec:max_dividend}, the ruin minimising strategy is deterministic. An intuitive consequence of the result above (Proposition \ref{prop:dividend_ruinn}) is that the optimal strategy that minimises ruin probability is also the strategy that minimises the value of expected discounted dividends.

\section{Reinsurance With a Target Terminal Distribution}\label{sec:reinsurance}
In this section, we change the setting considered in Sections \ref{sec:max_dividend} and \ref{div:ruin}. We consider an insurance company who buys reinsurance for a certain branch of their business or a pool of insured claims.

We consider a probability space $(\Omega,\mF,\mP)$, and a fixed time horizon $T>0$. Let $Z$ be a random variable representing a claim size having positive finite first and second moments denoted by $\mE[Z]=\mu$ and $\mE[Z^2]=\mu_2$, respectively. We assume that the surplus of the insurance company is described by a Brownian motion with drift, approximating a Cramer-Lundberg model like, e.g., in \citet[p.\ 226]{HS},
\[
X_t=x+\lambda \eta \mu t+ \sqrt{\lambda\mu_2} W_t\;, \quad t \in [0,T],
\]
where $\lambda,\eta>0$ and $W=(W_t)_{t \in [0,T]}$ is a Brownian motion. We also define by $(\mF_t)_{t \in [0,T]}$ natural filtration of the Brownian motion, under the usual hypotheses.
The insurance company is allowed to buy proportional reinsurance with retention $b\in [0,1]$ to mitigate the losses. We assume that the reinsurance premium is calculated via the expected value principle, that is $[(1 + \theta)\lambda \mu - \mE[r(Z, b)]$ where $r(Z,b)=(1-b)Z$. Then, the premium rate that remains to the insurer is $$c(b) = \lambda(1 + \theta)\mE[r(Z, b)] - \lambda\mu(\theta - \eta)$$ with $c(0)<0$, see, e.g.\ \cite[Ch.\ 2.2]{HS} for more details.

Under a reinsurance strategy $\mathbf{b}=(b_s)_{s\in[0,T]}$, the surplus is given by
\[
X_t^{\mathbf{b}}= x + \lambda\mu\int_0^t(\theta  b_s  - (\theta - \eta)) \md s +\sqrt{\lambda\mu_2}\int_0^t  b_s   \md W_s\;, \quad t \in [0,T].
\]
We denote by
$$\bar{X}_t^\mathbf{b}:=X^\mathbf{b}_t-x, \quad t \in [0,T],$$
the net value of collective, i.e.\ the part of the surplus that only accounts for insurance/reinsu\-rance premia and claims.

The insurance company wants to make sure that the collected premia
are sufficient (in a certain sense) to buy reinsurance, if necessary, and to pay the occurring claims. To achieve such level of {\em sustainability} the target of the insurance is to choose a reinsurance strategy such that at time $T$ the distribution of the net collective is normal with mean $MT$, for some small $M>0$ and variance $\delta^2T$. To gain some intuition on the choice of $M$ and $\delta$, we may interpret $M$ as a (small) positive target gain, and $\delta$ is fixed to fulfil $\mP[-\bar X^B_T>\ell]\le 1-\alpha$ for some given $\ell >0$ and $\alpha \in (0,1)$. The latter is a condition on the Value at Risk (VaR) at the confidence level $\alpha$ (for instance $\alpha=99.5\%$). In particular, $\ell$ represents the loss that the insurer can bear with at most probability $1-\alpha$. This can be interpreted as the required capital ensuring the system's solvency. Aiming at $\bar{X}_T^B\sim N(MT,\delta^2T)$ as a target distribution is justified, for instance,  by the existence of the closed form formulas for the VaR or Expected Shortfall (ES) for Gaussian random variables, which  can be easily calculated\footnote{Denoting by $L_T$ the terminal loss at time $T$, we immediately get that
\begin{equation*}
  \tV_{\alpha}(L_T)=-MT+\delta\sqrt{T}\Phi^{-1}(\alpha),
\end{equation*}
and
\begin{equation*}
	ES_{\alpha}(L_T)= - MT + \delta \sqrt{T} \frac{\varphi(\Phi^{-1}(\alpha))}{1-\alpha},
\end{equation*} for $\alpha \in (0,1)$, where $\varphi$ and $\Phi$ denote the density and the cumulative distribution function of the standard normal, respectively.}.

To reach the target distribution, the insurance company follows a {\em sustainable} strategy, that is reinsurance is only financed through premia, and we additionally require that reinsurance strategies do not produce a negative premium rate for the insurer. This condition is, in spirit, similar to the self-financing condition which is often assumed in finance. Indeed, a branch of business or a pool of insured is considered a closed system, where the insurance company does not intervene by injecting or withdrawing additional capital.

Our next step is to define the set of possible controls leading to the desired distribution. We let $\mathcal{B}$ denote the set of strategies $\mathbf{b}=(b_t)_{t \in [0,T]}$ with $b_t\in[0,1]$ for all $t \in [0,T]$, that are adapted to $(\mF_t)_{t \in [0,T]}$ and such that $\bar{X}^{\mathbf{b}}_T\sim N(MT,\delta^2T)$. \\
Note that, in particular, deterministic controls make the terminal distribution of the net collective surplus Gaussian, see Example \ref{ex:example1}.

\begin{example}[Deterministic controls]\label{ex:example1}
Let $\mathbf{b}=(b(t))_{t \in [0,T]}$ be a continuous deterministic reinsurance strategy, with $b(t)\in [0,1]$ for all $t \in [0,T]$\footnote{In this example we use the notation $b(t)$ in place of $b_t$ to emphasise the deterministic nature of the strategy.}. Then $\mathbf{b}$ is an admissible control if the following two conditions hold:
\begin{equation}
\label{eq:normal}
\begin{cases}
\lambda\mu\int_0^T (\theta  b(s) - (\theta - \eta))\md s=MT\;, &
\\\lambda\mu_2\int_0^T b(s)^2\md s=\delta^2T\;. &
\end{cases}
\end{equation}

\begin{figure}[t]
\begin{center}
\includegraphics[width=0.5\textwidth]{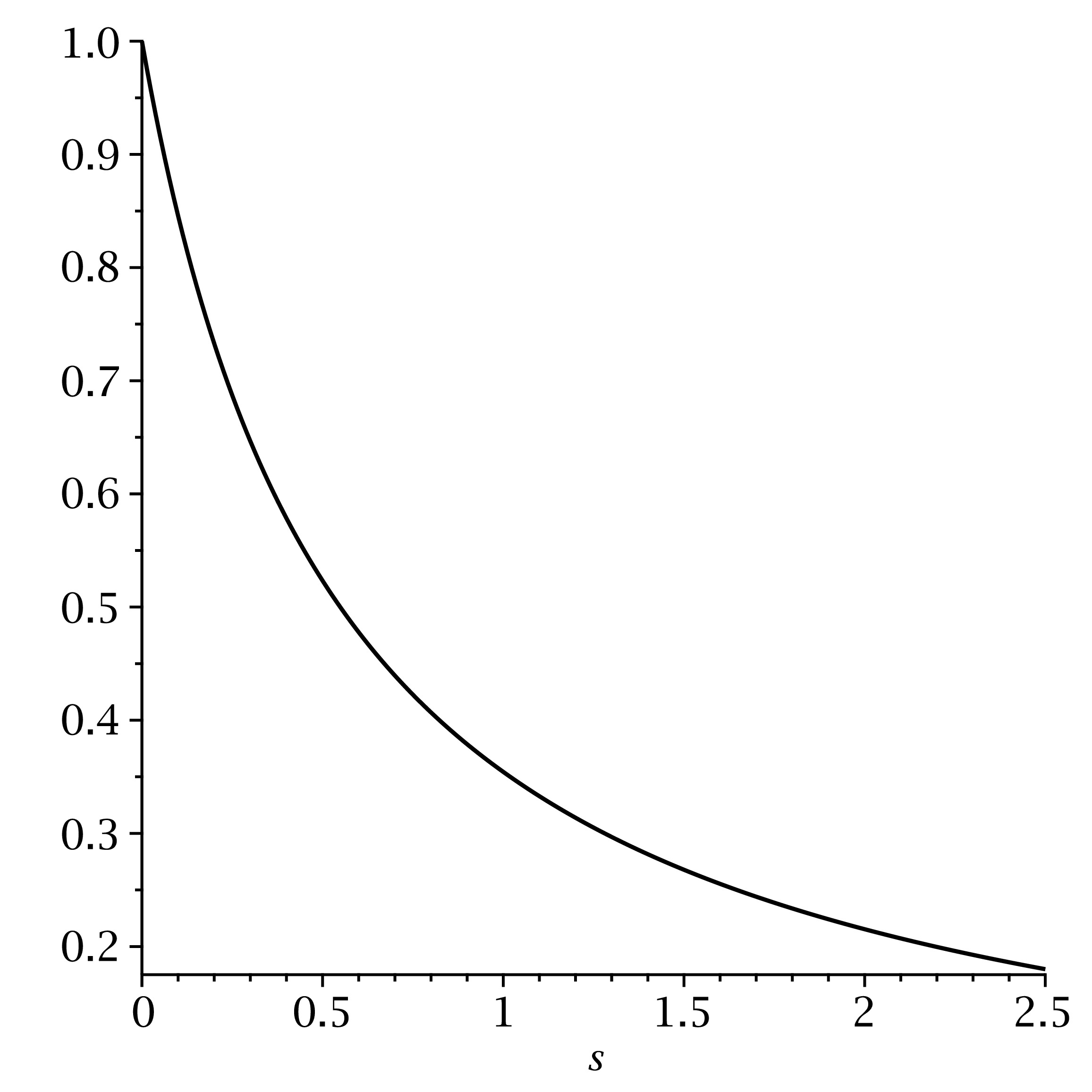}
\caption{The strategy $b(t)=\frac{A}{A+Ct}$ with $A= 3.61$, $C= 6.58$. \label{fig:1}}
\end{center}
\end{figure}
To make an example, $b(t)=\frac{A}{A+Ct}$ is an admissible control  for constants $A, C$ which satisfy
\begin{align*}
& \int_0^T \frac{A}{A+Cs} \md s =\frac{A^2}{C}\ln\Big(\frac{A+CT}{A}\Big)=\frac{M+ \lambda\mu(\theta - \eta)T}{\lambda \theta \mu}\;,
\\& \int_0^T\frac{A^2}{(A+Cs)^2}\md s=\frac A C\Big(1-\frac A{A+CT}\Big)=\frac{\delta^2}{\lambda\mu_2}\;.
\end{align*}
For the parameter set given by $\mu = 0.05$, $\lambda = 1$, $\eta = 0.3$, $\theta = 0.5$, $\mu_2 = 0.05$, $M = 0.06$, $\delta = 0.15$, $T = 2.5$,  we get that  $A= 3.613$, $C= 6.5837$. The strategy is illustrated in Figure \ref{fig:1}. \hfill $\blacksquare$
\end{example}

In the sequel, we restrict to the case where reinsurance strategies can be updated only at deterministic time points. In fact, we concentrate on the case $n=2$ and we refer to this case as the {\em two period model}. The reason is that differently than in the dividend case, reinsurance controls affect both the drift and the volatility. Therefore, in this case, pathwise comparison is not possible anymore, and the problem requires different techniques. In the case $n=2$, we are still able to obtain an explicit solution with probabilistic methods. However, the problem becomes immediately more complicated when we increase the number of periods (see Section \ref{sec:3periods}), even in case we restrict to deterministic strategies.

\subsection{Admissible strategies in a 2-period model}
We denote the set of admissible strategies by $\mathcal{B}_{(2)}$, where, like before, the subscript indicates the number of strategy updates up to time $T$.
An admissible strategy is a pair $\mathbf{b}=(b_0, b_1)$, where $b_0$ is $\mF_0$-measurable and $b_1$ is $\mF_{T/2}$ measurable; In this setting the retention level is updated only once, at time $T/2$. Hence, at time $T$ the net surplus satisfies
\begin{align*}
\bar{X}^\mathbf{b}_T&=\frac{\lambda\mu\theta T}{2}(b_0+b_1)-\lambda\mu(\theta-\eta)T+\sqrt{\lambda \mu_2}b_0 W_{T/2}+\sqrt{\lambda \mu_2}b_1(W_T-W_{T/2})
\\&=\bar{X}_{T/2}^{b_0}+b_1\frac{T}{2}\lambda\mu\theta-\lambda\mu(\theta-\eta)\frac{T}{2}+b_1\sqrt{\lambda \mu_2}(W_T-W_{T/2})\;,
\end{align*}
where $\bar{X}_{T/2}^{b_0}=\lambda\mu(\theta b_0-\theta+\eta) \frac{T}{2}+\sqrt{\lambda \mu_2}b_0 W_{T/2}$.

\medskip

The set of admissible strategies is characterised in the lemma below. In particular, we show that admissible strategies are deterministic.

\begin{lemma}
The set $\mathcal{B}_{(2)}$ consists of all strategies $\mathbf{b}=(b_0, b_1)$ where $b_0, b_1$ are both $\mathcal{F}_0$-measurable, taking values in $[0,1]$, and satisfying the following two conditions:
\begin{align}
\begin{split}\label{conditions1}
&b_1=2\frac{M+\lambda\mu (\theta - \eta)}{\lambda\theta \mu }-b_0
\\&b_1^2=\frac{2\delta^2}{\lambda\mu_2 }-b_0^2\;.
\end{split}
\end{align}
\end{lemma}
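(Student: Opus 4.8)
The plan is to dispatch $b_0$ immediately and then concentrate all the work on showing that $b_1$ cannot be genuinely random. Since $b_0$ is $\mathcal F_0$-measurable and $\mathcal F_0$ is $\mP$-trivial (Blumenthal zero--one law for the completed natural filtration of $W$), $b_0$ is a.s.\ a constant. The essential claim is that the $\mathcal F_{T/2}$-measurable retention $b_1$ is in fact deterministic; once this is established, the two displayed identities are merely mean/variance matching. Indeed, equating $\mE[\bar X^\mathbf b_T]=MT$ and $\mathrm{Var}(\bar X^\mathbf b_T)=\delta^2T$ with the deterministic-integrand formulas $\lambda\mu[\theta\frac T2(b_0+b_1)-(\theta-\eta)T]$ and $\lambda\mu_2\frac T2(b_0^2+b_1^2)$ yields exactly \eqref{conditions1}, and the range $b_0,b_1\in[0,1]$ is inherited from the definition of $\mathcal B$. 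Conversely, every deterministic pair satisfying \eqref{conditions1} is admissible, since a deterministic integrand produces a Gaussian Itô integral with precisely these two moments.

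To prove $b_1$ deterministic I would condition on $\mathcal F_{T/2}$ and exploit the independent increment $\Delta W:=W_T-W_{T/2}$. Setting $R:=\frac T2\lambda\mu\theta+\sqrt{\lambda\mu_2}\,\Delta W\sim N(\gamma,\rho)$ with $\gamma=\frac T2\lambda\mu\theta>0$ and $\rho=\lambda\mu_2\frac T2>0$, and noting that $R$ is independent of $\mathcal F_{T/2}$, the terminal net surplus decomposes as $\bar X^\mathbf b_T=A+b_1R$, where $A:=\bar X^{b_0}_{T/2}-\lambda\mu(\theta-\eta)\frac T2$ and $b_1$ are $\mathcal F_{T/2}$-measurable. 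Conditionally on $\mathcal F_{T/2}$ the surplus is thus Gaussian with mean $M_1=A+\gamma b_1$ and variance $V_1=\rho b_1^2$, so the constraint $\bar X^\mathbf b_T\sim N(MT,\delta^2T)$ becomes, by the tower rule, the moment generating function identity $\mE[\exp(\zeta M_1+\frac12\zeta^2V_1)]=e^{\zeta MT+\frac12\zeta^2\delta^2T}$ for all $\zeta\in\R$ (all expectations being finite because $b_1\in[0,1]$).

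The heart of the argument --- and the step I expect to be the main obstacle --- is deducing from this identity that $V_1$, equivalently $b_1$, is a.s.\ constant. This is precisely where the problem is harder than the dividend setting of Proposition~\ref{prop:dividend_strategies}: there the control acted only on the drift, the strategies could be compared path by path, and the analogous identity was linear in the control; here $b_1$ multiplies the Brownian increment, producing the quadratic term $\frac12\zeta^2\rho b_1^2$ and destroying pathwise comparison. As a first rigorous reduction I would insert $\pm\zeta$ into the identity and apply Cauchy--Schwarz, which gives $\mE[e^{tV_1}]\le e^{t\delta^2T}$ for every $t\ge0$ and hence $V_1\le\delta^2T$ a.s. To upgrade this to constancy I would use the independence of $R$: since $R\sim N(\gamma,\rho)$ with $\gamma>0$ has strictly positive moments of every order, the bounded variable $b_1$ is moment-determinate and its moments are pinned down by matching those of $\bar X^\mathbf b_T$ against those of $R$, so comparison with the unique constant solution forces $b_1$ to be a.s.\ constant. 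The delicate point is the possible dependence of $b_1$ on the first-period value $W_{T/2}$ that enters $A$; I would neutralise it by an $\mathcal F_{T/2}$-Girsanov change of measure removing the $\sqrt{\lambda\mu_2}\,b_0W_{T/2}$ term (exactly as the measure change employed in Proposition~\ref{prop:dividend_strategies}), reducing the identity to one for the law of $b_1$ alone, where the uniqueness-of-MGF argument applies.

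Finally, having shown that $b_1$ is $\mathcal F_0$-measurable, I would record the two relations \eqref{conditions1} from the mean and variance matching described in the first paragraph, which completes the characterisation of $\mathcal B_{(2)}$.
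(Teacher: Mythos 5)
Your overall skeleton is the same as the paper's: triviality of $\mF_0$ (so $b_0$ is a constant), conditioning on $\mF_{T/2}$ via the independent increment, a Girsanov-type change of measure to decouple the Brownian noise from the control, and then a distributional-identification step forcing $b_1$ to be constant, after which \eqref{conditions1} is indeed just mean/variance matching together with the trivial converse. (The paper performs the same reduction, only its measure change removes the second-period term $\zeta\sqrt{\lambda\mu_2}\,b_1(W_T-W_{T/2})$ rather than your first-period term $\zeta\sqrt{\lambda\mu_2}\,b_0W_{T/2}$; both are legitimate.) Your Cauchy--Schwarz observation is correct and gives the almost sure bound $\lambda\mu_2 b_1^2\le 2\delta^2$, but an upper bound is of course far from constancy.

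The genuine gap is in the decisive step. After your change of measure the constraint reads $\mE_{\mathbf Q}\big[e^{\zeta a b_1+\frac{\zeta^2}{2}c b_1^2}\big]=e^{\zeta m+\frac{\zeta^2}{2}v}$ for all $\zeta\in\R$, with $a=\lambda\mu\theta T/2$, $c=\lambda\mu_2T/2$ and suitable constants $m,v$. This is \emph{not} the moment generating function of any fixed random variable: the exponent is quadratic in $\zeta$ with \emph{random} coefficients $b_1$ and $b_1^2$, so ``the uniqueness-of-MGF argument applies'' is exactly the assertion that needs proof, and it does not follow from the uniqueness theorem you cite. (This is precisely where the present lemma is harder than Proposition \ref{prop:dividend_strategies}, where the control entered the exponent linearly and the left-hand side really was an MGF.) Your moment-determinacy sentence does not fill the hole either: determinacy of the bounded variable $b_1$ says its law is pinned down by its moments, but you must still compute which moment sequence the identity forces and show it can be realised by a $[0,1]$-valued variable only in the degenerate case. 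Concretely, matching coefficients of $\zeta$ and $\zeta^2$ gives $\mE_{\mathbf Q}[b_1]=m/a$ and $\mE_{\mathbf Q}[b_1^2]=(m^2+v)/(a^2+c)$, which agree with the moments of a constant only under the compatibility condition $a^2v=cm^2$; in all other cases one has to derive a contradiction with $b_1\in[0,1]$, and nothing in your proposal does this. The paper supplies an argument at exactly this point: it expands both sides in powers of $\zeta$ at $0$, identifies all moments of $b_1$, and concludes that $b_1$ would have to follow a normal law, which is impossible for a non-degenerate variable supported in $[0,1]$, hence $b_1$ is constant. Your proposal defers this key step to a theorem that is not applicable, so as it stands the proof is incomplete.
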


\begin{proof}
Recall that for any normally distributed random variable $Y$ with mean $MT$ and variance $\delta^2T$, the moment generating function is given by
$\mE\left[e^{\zeta Y}\right]=e^{\zeta MT+\frac{1}{2}\zeta^2\delta^2T}$ for all $\zeta\in \R$. Let $\tilde W_{{T}/{2}}=W_T-W_{{T}/{2}}$; then $W_{T/2}$ and $\tilde W_{{T}/{2}}$ are independent.
Since  $\mathbf{b}$ is chosen so that $X^{\mathbf{b}}_T\sim N(MT,\delta^2 T)$, it holds that
\[
\mE[e^{\zeta X_T^B}]= e^{\zeta MT+\frac{1}{2}\zeta^2\delta^2T}
\]
for all $\zeta\in \R$.
Now, we let $\mathbf{Q}$ be a probability measure equivalent to $\mP$, with the Radon-Nikodym derivative
\[
\left.\frac{\md \mathbf{Q}}{\md \mP}\right|_{\mathcal F_T}=  e^{r\sqrt{\lambda \mu_2}b_1 \widetilde{W}_{{T}/{2}}- \frac{r^2}{2}\lambda \mu_2 b_1^2\frac{T}{2}}.
\]
Using the independence of $\tilde W_{T/2}$ and $W_{T/2}$ and the change of measure we get that
\begin{align*}\label{eq:MGFX}
e^{\zeta MT+\frac{1}{2}\zeta^2\delta^2T}&=\mE[e^{\zeta \bar X_T^B}]= \mE\left[e^{\zeta \bar X^{b_0}_{T/2}+r\lambda\mu(\theta b_1-\theta+\eta)\frac{T}{2}+\zeta \sqrt{\lambda\mu_2}b_1 \tilde W_{T/2}}\right]
\\&=e^{\zeta\lambda\mu (b_0\theta-\theta+ \eta)T/2+\frac{\zeta^2\lambda\mu_2 b_0^2 T}{4}}\mE\left[e^{\zeta\lambda\mu(\theta b_1-\theta+\eta)\frac{T}{2}+\zeta\sqrt{\lambda\mu_2}b_1 \tilde W_{T/2}}\right]
\\&=e^{\zeta \lambda\mu (b_0\theta-\theta+ \eta)T/2+\frac{\zeta^2\lambda\mu_2 b_0^2 T}{4}}\mE_{\mathbf{Q}}\left[e^{\zeta\lambda\mu(\theta b_1-\theta+\eta)\frac{T}{2}+\frac{\zeta^2}{2}\lambda \mu_2 b_1^2\frac{T}{2}}\right]
\end{align*}
for all $\zeta\in\R$. This can be simplified to
\[
\mE_{\mathbf{Q}}[e^{\zeta\lambda\mu(\theta b_1-\theta+\eta)\frac{T}{2}+\frac{\zeta^2}{2}\lambda \mu_2 b_1^2\frac{T}{2}}]=e^{\zeta MT-\zeta\lambda\mu (b_0\theta-\theta+ \eta)\frac{T}{2}+\frac{1}{2}\zeta^2\delta^2T-\frac{\zeta^2\lambda\mu_2 b_0^2 T}{4}}\;.
\]
Deriving the above expression with respect to $\zeta$ and letting $\zeta=0$ we see that all moments of $b_1$ correspond to the moments of a normal distribution, meaning that the moment generating function of $b_1$ (written as a power series with the moments as coefficients) corresponds to that of a normal distribution. Therefore, we conclude
\begin{equation*}
b_1 \sim N\left( 2\frac{M+\lambda \mu (\theta-\eta)}{\lambda \mu \theta}-b_0,\frac{2\delta^2}{\lambda \mu_2}-  b_0^2 \right)\;.
\end{equation*}
However, this is impossible because $b_1$ can attain values only in $[0,1]$ $\mP$-a.s. (hence also $\mathbf{Q}$-a.s.), which means that $b_1$ must be constant.
\end{proof}

Because $b_0, b_1$ can take values only in $[0,1]$, it is clear that not all arbitrary values of $M$ and $\delta$ are reachable. In the next lemma we specify the ranges of $M$ and $\delta$.

\begin{lemma}\label{prop}
If there exist $b_0,b_1\in[0,1]$ such that Condition \eqref{conditions1} holds, then the target mean $M$ and the variance $\delta>0$ satisfy:m
\begin{gather}0\le  M  \le \lambda\mu \eta,\nonumber\\
\left( \frac{M+\lambda \mu(\theta-\eta)}{\lambda \mu \theta}\right)^2, \frac{M}{\lambda\mu\theta} \le  \frac{\delta^2}{\lambda \mu_2}  \le \min \left\{ 2 \left( \frac{M+\lambda \mu(\theta-\eta)}{\lambda \mu \theta}\right)^2,1 \right\}.\label{eq:bound_vol}
\end{gather}
\end{lemma}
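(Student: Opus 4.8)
The plan is to translate the membership $b_0,b_1\in[0,1]$, together with the two identities assumed in \eqref{conditions1}, into elementary inequalities between the two aggregate quantities
\[
P:=\frac{M+\lambda\mu(\theta-\eta)}{\lambda\mu\theta}=\frac{b_0+b_1}{2},\qquad Q:=\frac{\delta^2}{\lambda\mu_2}=\frac{b_0^2+b_1^2}{2},
\]
where the second equality in each case follows by rearranging \eqref{conditions1} (summing the first line gives $b_0+b_1=2P$, and the second gives $b_0^2+b_1^2=2Q$). Every assertion in the statement is then a statement about $P$ and $Q$: the two claimed lower bounds read $P^2\le Q$ and $\tfrac{M}{\lambda\mu\theta}\le Q$, the two upper bounds read $Q\le 2P^2$ and $Q\le1$, and $0\le M\le\lambda\mu\eta$ is equivalent to $\tfrac{\theta-\eta}{\theta}\le P\le1$.

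First I would settle the range of $M$. Since $b_0,b_1\le1$ we have $P\le1$, and from $M=\lambda\mu\bigl(\theta P-(\theta-\eta)\bigr)$ this gives $M\le\lambda\mu\eta$; the lower bound $M\ge0$ is the standing modelling assumption that the target mean is positive (equivalently $P\ge\tfrac{\theta-\eta}{\theta}$). Next come the two variance-type bounds, which follow from the elementary relation between a sum and a sum of squares of two numbers. Expanding $(b_0-b_1)^2=2(b_0^2+b_1^2)-(b_0+b_1)^2=4Q-4P^2\ge0$ gives $Q\ge P^2$, the first lower bound. For the upper bounds I would use $b_0b_1\ge0$ together with $(b_0+b_1)^2=b_0^2+b_1^2+2b_0b_1$, i.e.\ $4P^2=2Q+2b_0b_1\ge2Q$, giving $Q\le2P^2$; and $b_i^2\le1$ for $i=0,1$, giving $2Q\le2$, i.e.\ $Q\le1$. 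Each of these is a one-line consequence of $b_0,b_1\in[0,1]$.

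The delicate step, and the one I expect to be the main obstacle, is the second lower bound $\tfrac{M}{\lambda\mu\theta}\le Q$. Writing $\tfrac{M}{\lambda\mu\theta}=P-\tfrac{\theta-\eta}{\theta}$, the plan is to chain
\[
\frac{M}{\lambda\mu\theta}=P-\frac{\theta-\eta}{\theta}\;\le\;P-\tfrac14\;\le\;P^2\;\le\;Q,
\]
where the middle inequality is the universal estimate $(P-\tfrac12)^2\ge0$ and the last is the already established $Q\ge P^2$. The first inequality, however, requires $\tfrac{\theta-\eta}{\theta}\ge\tfrac14$, so this is where the argument ceases to be purely formal. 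I would either invoke $\eta\le\tfrac34\theta$ as a standing parameter restriction (it holds in the paper's running example, where $\tfrac{\theta-\eta}{\theta}=0.4$), or verify directly on the feasible interval $P\in[\tfrac{\theta-\eta}{\theta},1]$ that the smallest admissible value $Q=P^2$ still dominates $P-\tfrac{\theta-\eta}{\theta}$, by checking the quadratic $P^2-P+\tfrac{\theta-\eta}{\theta}\ge0$ at its vertex $P=\tfrac12$ and at the endpoints. I expect the bookkeeping of exactly which elementary inequality is tight — and hence which parameter regime is genuinely needed for this last bound — to be the only real subtlety; all the remaining bounds are immediate.
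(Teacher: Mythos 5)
Your reformulation in terms of $P=(b_0+b_1)/2$ and $Q=(b_0^2+b_1^2)/2$ is exactly the paper's argument in disguise: the paper obtains $Q\ge P^2$ by substituting the first equation of \eqref{conditions1} into the second and requiring the resulting quadratic in $b_0$ to have a real root (its discriminant condition is your $(b_0-b_1)^2\ge 0$), obtains $Q\le 2P^2$ from the nonnegativity of $b_0$ (your $b_0b_1\ge 0$), obtains $Q\le 1$ and $M\le\lambda\mu\eta$ from $b_0,b_1\le 1$, and, like you, effectively treats $M\ge 0$ as a modelling constraint rather than something derivable from $b_0,b_1\in[0,1]$ alone. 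So for every bound that the paper actually proves, your proposal is correct and follows the same route.

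The step you flag as the real obstacle is genuinely one --- but the hole is in the paper, not just in your write-up: the paper's proof never addresses the lower bound $\frac{M}{\lambda\mu\theta}\le\frac{\delta^2}{\lambda\mu_2}$ at all, and your suspicion that it cannot be had for free is justified, since it is false under the stated hypotheses. Take $b_0=b_1=\tfrac12$ with $\eta$ close to $\theta$: then \eqref{conditions1} holds with $Q=\tfrac14$ and $M=\lambda\mu\left(\eta-\tfrac{\theta}{2}\right)>0$, so $\frac{M}{\lambda\mu\theta}=\frac{\eta}{\theta}-\tfrac12$, which exceeds $Q=\tfrac14$ as soon as $\eta>\tfrac34\theta$. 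Your sufficient condition $\eta\le\tfrac34\theta$ is precisely the dividing line, via the chain $Q\ge P^2\ge P-\tfrac14\ge P-\frac{\theta-\eta}{\theta}$. Note also that the paper's remark following the lemma invokes only the squared bound $\left(\frac{M}{\lambda\mu\theta}\right)^2\le\frac{\delta^2}{\lambda\mu_2}$, which does follow from $P^2\le Q$ because $0\le M\le M+\lambda\mu(\theta-\eta)$ when $\eta\le\theta$; so the unsquared term in \eqref{eq:bound_vol} is most plausibly a misprint for the squared one, and under that reading your argument, like the paper's, is complete without any extra restriction on $\eta$.
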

\begin{proof}
From Conditions \eqref{conditions1}, and the fact that $b_0,b_1$ take values in $[0,1]$, we get that $0\le M\leq \lambda \mu \eta$ and that $\frac{\delta^2}{\lambda\mu_2}\leq 1$.
Using again the conditions \eqref{conditions1} and substituting the value of $b_1$ into the second equation we get that $b_0$ must solve
\[2b_0^2-4b_0\frac{M+\lambda\mu(\theta-\eta)}{\lambda\mu\theta}+4\left(\frac{M+\lambda\mu(\theta-\eta)}{\lambda\mu\theta}\right)^2-\frac{2\delta^2}{\lambda\mu}=0 \]
Imposing the existence of a real solution leads to
\[
\frac{\delta^2}{\lambda \mu_2}\ge\left( \frac{M+\lambda \mu(\theta-\eta)}{\lambda \mu \theta}\right)^2.
\]
Then, using the fact that $b_0$ must take non-negative values leads to the bound:
  \[\frac{\delta^2}{\lambda \mu_2} \le \min \left\{2 \left( \frac{M+\lambda \mu(\theta-\eta)}{\lambda \mu \theta}\right)^2,1\right\}.\]
\end{proof}

Notice that because we do not allow for arbitrage and require $\eta<\theta$, to ensure the existence of a solution at least for the case $\eta=\theta$, we must have that $\frac{\delta^2}{\lambda \mu_2}\ge\big( \frac{M}{\lambda \mu \theta}\big)^2$, which is guaranteed by \eqref{eq:bound_vol}.

There is a clear trade-off between increasing the profits and reducing the risk. This is due to the fact that a reinsurance strategy controls both the mean and the volatility. Under a reinsurance strategy the mean and the volatility move into the same direction: increasing the retention level makes the mean larger, but also the volatility. This observation has important consequences for the ruin probability. Indeed, a bigger retention level would make the drift of the net collective larger, meaning that it potentially can push the surplus away from zero; however, at the same time, it increases the riskiness by making the volatility larger. For instance, considering the parameters $\mu=0.22;  \mu_2=0.05; \eta=0.3; \theta=0.35;\lambda=2$, $M=0.08<0.132$, we get that the admissible values of $\delta$ vary in the range $[0.2094,0.2962]$. If an insurance company aims at getting an expected gain of $8\%$ at the end of the observation period, it has to account for a relatively large risk of at least $21\%$.

We can write the range for $\delta$ as
\[\left( 1+ \frac{M-\lambda \mu \eta}{\lambda \mu \theta}\right)^2 \le \frac{\delta^2}{\lambda \mu_2} \le \min \left\{ 2 \left( 1+ \frac{M- \lambda \mu\eta}{\lambda \mu \theta}\right)^2,1 \right\}.\]
From this expression  it is clear that if the target return is close to $\lambda \mu \eta$, the variance $\delta^2$ is approximately $\lambda \mu_2$, which corresponds to the case where no reinsurance is bought.

\subsection{Ruin probabilities in a 2-period model}
For the case $n=2$, the pairs of strategies that satisfy Conditions \eqref{conditions1} are of the type $(b_0, b_1)$ and $(b_1, b_0)$.

We assume that $\frac{T}{2}$ and $T$ are the regulatory authorities' inspection dates. A reinsurance strategy is chosen so that the probability of having a positive surplus at both dates is maximised.
\\We now give a definition of ruin within this setting. We say that {\em the ruin occurs if the insurance company showcases a negative surplus at any of the time points $T/2$ or $T$}.
Then, an equivalent formulation of the problem is:
\begin{center}
Find a reinsurance strategy that minimises the ruin probability.
\end{center}

In mathematical terms, the problem is formulated as follows. Let $\mathbf{b}=(b_0,b_1)$ and $\tilde{\mathbf{b}}=(b_1, b_0)$ be the two admissible strategies. Without loss of generality, we assume that $b_0\le b_1$. For each strategy we define the corresponding {\em survival probabilities}:
\begin{align*}
p(\mathbf{b})&=\mP\left[\bar{X}^{\mathbf{b}}_{T/2}>0, \bar{X}^{\mathbf{b}}_T>0\right]\;,\\
p(\tilde{\mathbf{b}})&=\mP\left[\bar{X}^{\tilde{\mathbf{b}}}_{T/2}>0, \bar{X}^{\tilde{\mathbf{b}}}_T>0\right]\;.
\end{align*}
Our objective is to decide which of these two probabilities, $p(\mathbf{b})$ or $p(\tilde{\mathbf{b}})$, is the largest.

The table below illustrates survival probability for different values of $\eta<\theta$ so that $M$ and $\delta$ are achievable for $T=1, \lambda=2, \mu=0.22, \mu_2=0.05, \theta=0.35, M=0.05, \delta=0.2$.  The last two columns suggest that $p(\mathbf{b})>p(\tilde{\mathbf{b}})$. This result is proved in Proposition \ref{prop:optimal_strategy} below.

\begin{center}
\begin{tabular}{|p{0.14\textwidth}|p{0.14\textwidth}|p{0.14\textwidth}|p{0.14\textwidth}|p{0.14\textwidth}|}
	\hline
 $\eta$          &$b_0$       & $b_1$  &  ${\bf{p}}(b_0,b_1)$ &   ${\bf{p}}(b_1,b_0)$	\\
\hline
$0.25$ & 0.4448 &0.7760 &0.4088 & 0.5117 \\
\hline
$0.26$ & 0.3339 &0.8298 &0.3772 &0.5372 \\
\hline
$0.27$ & 0.2468 &0.8597 &0.3485 &0.5561\\
\hline
$0.28$ & 0.1715 &0.8778 & 0.3154 &0.5720 \\
\hline
$0.29$ & 0.1038 &0.8884 &0.2637 &0.5857 \\
\hline
$0.3$ & 0.0416 & 0.8935 & 0.1254 &0.5967 \\
\hline

\end{tabular}
\end{center}

\begin{proposition}\label{prop:optimal_strategy}
Let $b_0<b_1$.
Then the strategy $(b_1, b_0)$ is better than the strategy $(b_0, b_1)$, i.e. $p(\tilde{\mathbf{b}})>p(\mathbf{b})$.
\end{proposition}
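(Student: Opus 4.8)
The plan is to turn the comparison of two bivariate Gaussian survival probabilities into a one-dimensional problem by conditioning on the common terminal value, and then to close the argument with a single-crossing/unimodality trick rather than a pointwise estimate.

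First I would set $W_1:=W_{T/2}$ and $W_2:=W_T-W_{T/2}$, which are independent $N(0,T/2)$ variables, and abbreviate $m_i:=\lambda\mu(\theta b_i-\theta+\eta)\tfrac T2$ and $\sigma_i^2:=\lambda\mu_2 b_i^2\tfrac T2$ for $i=0,1$. Writing $U:=m_0+\sqrt{\lambda\mu_2}\,b_0W_1$, $V:=m_1+\sqrt{\lambda\mu_2}\,b_1W_2$ and $S:=U+V$, one has $p(\mathbf b)=\mP[U>0,\,S>0]$. Since $W_1,W_2$ are identically distributed, relabelling them in $p(\tilde{\mathbf b})$ gives $p(\tilde{\mathbf b})=\mP[V>0,\,S>0]$ with the \emph{same} $S=\bar X^{\mathbf b}_T\sim N(MT,\delta^2T)$. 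Hence, denoting by $f_S$ the $N(MT,\delta^2T)$ density,
\[
 p(\tilde{\mathbf b})-p(\mathbf b)=\int_0^\infty g(v)\,f_S(v)\,\md v,\qquad g(v):=\mP[V>0\mid S=v]-\mP[U>0\mid S=v].
\]

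Next I would compute the conditional laws of $U$ and $V$ given $S=v$. Both are Gaussian, and a direct calculation shows they share the \emph{same} conditional variance $\sigma_0^2\sigma_1^2/(\delta^2T)$, whereas their conditional means are affine in $v$ with $\mE[V\mid S=v]-\mE[U\mid S=v]=(m_1-m_0)+\tfrac{\sigma_1^2-\sigma_0^2}{\delta^2T}(v-MT)$. Because $b_0<b_1$ forces $m_1>m_0$ and $\sigma_1^2>\sigma_0^2$, this difference is strictly increasing in $v$, so $g$ has a single sign change: there is a point $v^*$ with $g<0$ on $(-\infty,v^*)$ and $g>0$ on $(v^*,\infty)$. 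The main obstacle is precisely that $g$ is \emph{not} sign-definite on the integration range $(0,\infty)$, so comparing the two conditional survival probabilities value by value is not enough to sign the integral.

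To get past this I would evaluate the full-line integral: by the tower property $\int_{\R}g\,f_S=\mP[V>0]-\mP[U>0]=\Phi(m_1/\sigma_1)-\Phi(m_0/\sigma_0)$, and since $b\mapsto m(b)/\sigma(b)=C\big(\theta-(\theta-\eta)/b\big)$ (for some constant $C>0$) is strictly increasing in $b$, its derivative being proportional to $(\theta-\eta)/b^2>0$, the hypothesis $b_0<b_1$ yields $\int_{\R}g\,f_S>0$. Finally set $F(a):=\int_a^\infty g\,f_S$, so that $p(\tilde{\mathbf b})-p(\mathbf b)=F(0)$, with $F(+\infty)=0$ and $F(-\infty)=\int_{\R}g\,f_S>0$; since $F'(a)=-g(a)f_S(a)$, the single-crossing property makes $F$ strictly increasing on $(-\infty,v^*)$ and strictly decreasing on $(v^*,\infty)$. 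Consequently, if $v^*\ge0$ then $F(0)\ge F(-\infty)>0$, while if $v^*<0$ then $F(0)>F(+\infty)=0$; in either case $F(0)>0$, which is the claim $p(\tilde{\mathbf b})>p(\mathbf b)$. The only routine parts left are the explicit conditional moment computation and the monotonicity check for $m(b)/\sigma(b)$.
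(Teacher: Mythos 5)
Your proposal is correct, and although it starts from the same reduction as the paper, it closes the argument by a genuinely different route. The setup coincides in substance: conditioning on the common terminal value $S$ is exactly the paper's regression decomposition $\bar X^{b_0}_{T/2}=\rho Y^0_{T/2}+Z^0$, $\bar X^{b_1}_{T/2}=\gamma Y^1_{T/2}+Z^1$; your ``equal conditional variances'' is their $\mathrm{Var}[Z^0]=\mathrm{Var}[Z^1]=\rho\gamma\delta^2T$, and your affine, increasing conditional-mean difference encodes their $\mE[Z^1]=-\mE[Z^0]$. The divergence is in how the integral $\int_0^\infty g\,f_S$ gets signed. The paper splits into cases according to the sign of $\mE[Z^0]$: when $\mE[Z^0]\le 0$ the comparison of the integrands is pointwise on $(0,\infty)$ and the conclusion is immediate; when $\mE[Z^0]>0$ the integrands cross at some $y^*>0$ and the paper resorts to a homotopy-in-$\eta$ argument --- proving the claim at $\eta=\theta$ by path scaling, then excluding, via the intermediate value theorem and the monotonicity of $\eta\mapsto b_0(\eta)$ and $\eta\mapsto b_1(\eta)$, that the two survival probabilities can ever cross for $\eta\in(0,\theta)$. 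You avoid the case split entirely: the single-crossing property of $g$ (equal conditional variances plus strictly increasing mean difference), the tower-property identity $\int_\R g\,f_S=\Phi(m_1/\sigma_1)-\Phi(m_0/\sigma_0)\ge 0$ (this is where the paper's standing no-arbitrage assumption $\theta>\eta$ enters, through the monotonicity of $b\mapsto m(b)/\sigma(b)$), and the unimodality of $F(a)=\int_a^\infty g\,f_S$ together force $F(0)>0$ no matter where the crossing point $v^*$ lies relative to $0$. This buys a unified and shorter proof that bypasses the paper's somewhat delicate contradiction step; what the paper's route buys instead is the explicit threshold $y^*=2\mE[Z^0]/(1-2\rho)$ and the economically interpretable sufficient condition $\mE[Z^0]\le 0$, which they exploit in the discussion following the proposition. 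Two minor points you should make explicit: strictness in the case $v^*\ge 0$ comes from strict monotonicity of $F$ on $(-\infty,v^*)$, i.e.\ $F(0)>F(-\infty)$ rather than $\ge$ (this matters only in the boundary situation $\eta=\theta$, where $F(-\infty)=0$, which the paper's assumption $c(0)<0$ excludes anyway); and the degenerate case $b_0=0$ (so $\sigma_0=0$ and $\Phi(m_0/\sigma_0)$ is undefined) needs a one-line separate treatment, though there $p(\mathbf b)=0$ trivially since $m_0<0$.
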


\begin{proof}
Let $(W_{t})_{t \ge 0}$ and $(\hat W_t)_{t \ge 0}$ be two independent Brownian motions and denote
\begin{align*}
\bar{X}^{b}_{T/2}&=\lambda \mu \left(\theta (1-b)-\eta\right)T/2+\sqrt{\lambda \mu_2} bW_{T/2}\;,\\
\hat{X}^{b}_{T/2}&=\lambda \mu \left(\theta (1-b)-\eta \right)T/2+\sqrt{\lambda \mu_2} b\hat{W}_{T/2}\;,
\end{align*}
Then, the survival probabilities can be rewritten as
\begin{align}
\label{eq:pr1}
\begin{split}
p(\mathbf{b})&=\mP\left[\bar{X}^{b_0}_{T/2}>0, \bar{X}^{b_0}_{T/2}+\hat{X}^{b_1}_{T/2}>0\right]\;,\\
p(\tilde{\mathbf{b}})&=\mP\left[\bar{X}^{b_1}_{T/2}>0, \bar{X}^{b_1}_{T/2}+\hat{X}^{b_0}_{T/2} >0\right]\;,
\end{split}
\end{align}
having set $\hat W_{T/2}=W_T-W_{T/2}$. The advantage of this representation stands in the fact that for every $b \in [0,1]$, $\bar{X}^b_{T/2}$ and  $\hat{X}^b_{T/2}$ are independent.
We observe that there exist standard Brownian motions $ W^0$ and $W^1$ such that
\begin{align}
\begin{split} \label{eq:W0}
&\sqrt{\lambda\mu_2}b_0 W_t+\sqrt{\lambda\mu_2}b_1 \hat W_t=\sqrt{\lambda\mu_2(b_0^2+b_1^2)}W_t^0\;,
\\& \sqrt{\lambda\mu_2}b_1 W_t+\sqrt{\lambda\mu_2}b_0 \hat W_t=\sqrt{\lambda\mu_2(b_0^2+b_1^2)} W_t^1\;,
\end{split}
\end{align}
for all $t \in [0,T]$. We now let
\begin{align*}
 Y^0_t&:=\lambda\mu\theta(b_0+b_1)t-2\lambda\mu(\theta-\eta)t+\sqrt{\lambda\mu_2(b_0^2+b_1^2)} W_t^0
\\
Y^1_t&:=\lambda\mu\theta(b_0+b_1)-2\lambda\mu(\theta-\eta)t+\sqrt{\lambda\mu_2(b_0^2+b_1^2)} W_t^1
\end{align*}
for all $t \in [0,T]$. Due to Equations \eqref{eq:W0} we get that for all $t \in [0,T]$,  $Y^0_t=Y^1_t=2Mt+\sqrt{2\delta^2} \hat W_t$, hence they are identically distributed.

Next, we write $\bar{X}_{T/2}^{b_0}$ and $\bar{X}_{T/2}^{b_1}$ in terms of  $Y^0_{T/2}$ and $Y^1_{T/2}$. Since $\bar{X}_{T/2}^{b_0}$ and $\bar{X}_{T/2}^{b_1}$ are normally distributed, we have that
\begin{align*}
&\bar{X}_{T/2}^{b_0}=\rho Y^0_{T/2}+Z^0\;, \quad \rho:=\frac{Cov(\bar{X}_{T/2}^{b_0}, Y^0_{T/2})}{Var(Y^0_{T/2})}=\frac{Var[\bar{X}_{T/2}^{b_0}]}{ Var[Y^0_{T/2}]}=\frac{\lambda\mu_2b_0^2}{2\delta^2 }\;,
\\&\bar{X}_{T/2}^{b_1}=\gamma Y^1_{T/2}+Z^1\;,\quad \gamma:=\frac{Cov(\bar{X}_{T/2}^{b_1}, Y^1_{T/2})}{Var[Y^1_{T/2}]}=\frac{Var[\bar{X}_{T/2}^{b_1}]}{Var[Y^1_{T/2}]}=\frac{\lambda\mu_2b_1^2}{2\delta^2 }=1-\rho\;,
\end{align*}
where $Y^0_{T/2}$ and $Z^0$, $Y^1_{T/2}$ and $Z^1$ are independent, since they are all normally distributed and $Cov(Y^0_{T/2},Z^0)=Cov(Y^1_{T/2},Z^1)=0$. \\Expectations and variances of $Z^0$ and $Z^1$ are given by
\begin{equation}
\label{ew}
\begin{cases}
\mE[Z^0]=\mE[\bar{X}_{T/2}^{b_0}-\rho Y^0_{T/2}]=\lambda\mu(\theta b_0-\theta+\eta)T/2-2\rho M T/2\;,&
\\\mE[Z^1]=\mE[\bar{X}_{T/2}^{b_1}-\gamma Y^1_{T/2}]=\lambda\mu(\theta b_1-\theta+\eta)T/2-2\gamma M T/2 =-\mE[Z^0]\;,&
\\Var[Z^0]=\lambda\mu_2b_0^2T/2-2\rho^2\delta^2 T/2=2\delta^2\rho\gamma T/2 \;, &
\\
Var[Z^1]= \lambda\mu_2 b_1^2T/2-2\gamma^2 \delta^2 T/2=2\delta^2\gamma\rho T/2\;. &
\end{cases}
\end{equation}
Using Fubini's theorem, we get
\begin{align*}
p(\mathbf{b})&=\mP\left[Y^0_{T/2}+ \frac{Z^0}{\rho}>0, \  Y^0_{T/2}>0\right]=\mP\left[ \frac{Z^0}{\rho}>-Y^0_{T/2},  \ Y^0_{T/2}>0\right]\\
&= \int_0^{\infty} \left(1-\Phi\left(\frac{-y-\frac{\mE\left[Z^0\right]}{\rho}}{\sqrt{\delta^2 \frac{\gamma}{\rho}T}}\right)\right) f_{Y^0_{T/2}}(y) \md  y\;,\\
p(\tilde{\mathbf{b}})&=\mP\left[ Y^1_{T/2}+ \frac{Z^1}{\gamma}>0,  \ Y^1_{T/2}>0\right] =\mP\left[ \frac{Z^1}{\gamma}>-Y^1_{T/2},  \ Y^1_{T/2}>0\right]\\
&= \int_0^{\infty} \left(1-\Phi\left(\frac{-y+\frac{\mE\left[Z^0\right]}{\gamma}}{\sqrt{\delta^2 \frac{\rho}{\gamma}T}}\right)\right) f_{Y^1_{T/2}}(y) \md y\;,
\end{align*}
where $\Phi$ is the standard normal distribution, $f_{Y^0_{T/2}}(y)=f_{Y^1_{T/2}}(y)$ are the densities of the random variables $Y^0_{T/2}$ and $Y^1_T/2$, respectively, and we have used that $\mE[Z^1]=-\mE[Z^0]$.
Since $\Phi$ is increasing, we consider the crucial quantities
\begin{equation}\label{z}
z_0:=\frac{-y-\frac{\mE\left[Z^0\right]}{\rho}}{\sqrt{\delta^2 \frac{\gamma}{\rho}\frac{T}{2}}} \quad \mbox{and}\quad z_1:=\frac{-y+\frac{\mE\left[Z^0\right]}{\gamma}}{\sqrt{\delta^2 \frac{\rho}{\gamma}\frac{T}{2}}}\;.
\end{equation}
We have the following two cases:
\\ $[i.]$
Assume first that $\mE[Z^0]\le 0$, with $\mE[Z^0]$ from \eqref{ew}.
\\Since $1-2\rho=\frac{\lambda \mu_2}{\delta^2}\left(\frac{\delta^2}{\lambda \mu_2}-b_0^2\right)=\frac{\lambda \mu_2}{\delta^2} \frac{b_1^2-b_0^2}{2}>0$, it holds that $z_0>z_1$ for all $y >0$. Then, we can immediately conclude, that $p(\mathbf{b})<p(\tilde{\mathbf{b}})$, and hence the strategy $\tilde{\mathbf{b}}=(b_1, b_0)$ is better than the strategy $\mathbf{b}=(b_0, b_1)$.  \medskip
\\$[ii.]$ Assume next that $\mathbb{E}[Z^0]>0$, with $\mE[Z^0]$ from \eqref{ew}.
\\Note that since $b_0\le b_1$ and $b_0,b_1\in[0,1]$, then either $b_0=b_1=1$, in which case there is nothing to prove since $\mathbf{b}$ and $\tilde{\mathbf{b}}$ are equal,  or it cannot hold that $b_0=1$. The latter implies that
there always exists an $y^*\in(0,+\infty)$ such that for $y<y^*$, it holds that $z_0<z_1$, and the opposite holds true for $y>y^*$, see, e.g.\ the right panel in Figure \ref{fig:survival_dominance}.\\
Now, we consider the functions $\eta to b_0(\eta)$ and $\eta \to b_1(\eta)$ for $\eta\in [0,\theta]$. We know that
\[
b_0(\eta)+b_1(\eta)=2\frac{M-\lambda\mu(\theta-\eta)}{\lambda\mu\theta}\quad \mbox{and}\quad b_0(\eta)^2+b_1(\eta)^2=2\frac{\delta^2}{\lambda\mu_2}
\]
meaning that $b_0'(\eta)+b_1'(\eta)=\frac{2}{\theta}>0$ and $b_0'(\eta)b_0(\eta)+b_1'(\eta)b_1(\eta)=0$. Consequently, since $b_0<b_1$ we get that  $b_1'(\eta)>0$ and $b_0'(\eta)<0$, that is to say, $b_0$ is decreasing with respect to $\eta$ and $b_1$ is increasing. 
\smallskip

We let $\eta=\theta$ and note that, in this case for any $a \in [0,1]$ it holds that 
\[
\bar X^b_t=b\big\{\lambda\mu\theta +\sqrt{\lambda\mu_2} W_t\big\}=\frac ba \bar X^a_t\;.
\]
Using this fact, equations \eqref{eq:pr1} and $b_0<b_1$, we get
\begin{align*}
p(\mathbf b)=\mP\big[\bar X_{T/2}^{b_0}>0, \bar X_{T/2}^{b_0}+\hat X_{T/2}^{b_1}\big]&= \mP\big[\bar X_{T/2}^{b_1}>0, \frac{b_0}{b_1}\bar X_{T/2}^{b_0}+\hat X_{T/2}^{b_0}\big]
\\&= \mP\big[\bar X_{T/2}^{b_1}>0, \frac{b_0^2}{b_1^2}\bar X_{T/2}^{b_1}+\hat X_{T/2}^{b_0}\big]
\\&< \mP\big[\bar X_{T/2}^{b_1}>0, \bar X_{T/2}^{b_1}+\hat X_{T/2}^{b_0}\big]
\\&=p(\tilde{\mathbf b})\;.
\end{align*}
which proves the statement in case $\eta=\theta$.

Now, we let $\mathbf{b}(\eta)$ and $\tilde{\mathbf{b}}(\eta)$ be, respectively, the strategies $(b_0, b_1)$ and $(b_1, b_0)$ corresponding to $\eta\in (0, \theta)$. Assume there is an $\bar\eta\in(0,\theta)$ such that $p(\mathbf b(\bar\eta))>p(\tilde{\mathbf b}(\bar\eta))$. Then, by the intermediate value theorem there exists an $\eta^*\in(\bar\eta,\theta)$ such that $p(\mathbf b(\eta^*))=p(\tilde{\mathbf b}(\eta^*))$.
\\Assume that $\mathbf b(\eta^*)\neq \tilde{\mathbf b}(\eta^*)$. Let $\bar X$ be a random variable, independent of $Z^1$ and $Z^0$ with $\bar X\sim N(MT,\delta^2 T)$.
\begin{align*}
0&=p(\mathbf b(\eta^*))-p(\tilde{\mathbf b}(\eta^*))
\\&=\mP\left[\gamma \bar X + Z^1>0, \bar X>0\right]-\mP\left[\eta \bar X + Z^0>0, \bar X>0\right]\\
&=\mP\left[\max\left(-\frac{Z^1}{\gamma},0\right)<\bar X < \max\left(-\frac{Z^0}{\eta},0\right)\right]>0\;.
\end{align*}
The last inequality follows from the fact that $\bar X$, $Z^1$ and $Z^0$ are normally distributed. Hence, this contradiction yields that $\mathbf b(\eta^*)=\tilde{\mathbf b}(\eta^*)$. However, since it holds that $b_1'>0$ and $b_0'<0$ for $b_0<b_1$, that means $b_0(\bar\eta)>b_0(\eta^*)$ and $b_1(\bar\eta)<b_1(\eta^*)$, contradicting $\mathbf b(\eta^*)=\tilde{\mathbf b}(\eta^*)$.\\ That concludes the fact that $\tilde{\mathbf{b}}=(b_0, b_1)$ is always better than $\mathbf{b}$.
\end{proof}

In the following, we discuss the situations where $\mE[Z^0]<0$ and derive sufficient conditions for this to hold.

\begin{lemma}
If \begin{equation}\label{eq:condition1}
\frac{\mu \eta}{\mu_2}\le \frac{M}{\delta^2} \ \mbox{ and } \ \frac{\mu \theta}{\mu_2}\le \frac{2M}{\delta^2}
\end{equation}
then $\mE[Z^0]\le 0$ with $\mE[Z^0]$ given in \eqref{ew}.
\end{lemma}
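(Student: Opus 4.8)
The plan is to collapse the claim into the non-negativity of a single quadratic in the retention level $b_0$ and then to read that off from the two bounds in \eqref{eq:condition1}. First I would insert the explicit value $\rho=\frac{\lambda\mu_2 b_0^2}{2\delta^2}$ into the expression for $\mE[Z^0]$ given in \eqref{ew}. After factoring out the positive constant $\frac{\lambda T}{2}$, the inequality $\mE[Z^0]\le 0$ becomes equivalent to
\[
q(b_0):=\frac{M\mu_2}{\delta^2}\,b_0^2-\mu\theta\,b_0+\mu(\theta-\eta)\ge 0,
\]
so the whole statement reduces to a sign question for one upward-opening parabola evaluated at the admissible $b_0\in[0,1]$.

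Next I would locate this parabola with respect to $[0,1]$. Because $\eta<\theta$, the constant term $q(0)=\mu(\theta-\eta)$ is positive, so non-negativity is automatic at the left endpoint. The first hypothesis $\frac{\mu\eta}{\mu_2}\le\frac{M}{\delta^2}$ is precisely the statement $q(1)=\frac{M\mu_2}{\delta^2}-\mu\eta\ge 0$, giving non-negativity at the right endpoint as well. The minimiser of $q$ is $b^{*}=\frac{\mu\theta\delta^2}{2M\mu_2}$, and the second hypothesis $\frac{\mu\theta}{\mu_2}\le\frac{2M}{\delta^2}$ says exactly that $b^{*}\le 1$; since also $b^{*}\ge 0$, the vertex lies inside $[0,1]$, so knowing the sign at the endpoints alone does not settle the matter.

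The crux is therefore to control the minimal value $q(b^{*})=\mu(\theta-\eta)-\frac{\mu^2\theta^2\delta^2}{4M\mu_2}$, i.e.\ to show that the discriminant $(\mu\theta)^2-4\frac{M\mu_2}{\delta^2}\mu(\theta-\eta)$ of $q$ is non-positive; once this holds, $q\ge 0$ on all of $\R$ and the conclusion follows for any admissible $b_0$. The second hypothesis $\frac{M\mu_2}{\delta^2}\ge\frac{\mu\theta}{2}$ delivers exactly this discriminant bound whenever $\eta\le\theta/2$. The step I expect to be the main obstacle is the complementary regime $\eta>\theta/2$, in which $q$ genuinely dips below zero somewhere in $[0,1]$ and an endpoint/vertex argument no longer suffices; there one must exploit that $b_0$ is the \emph{smaller} admissible root and invoke the admissibility relations \eqref{conditions1}--\eqref{eq:bound_vol} (in particular $b_0\le b_1$ together with $b_0+b_1=\frac{2(M+\lambda\mu(\theta-\eta))}{\lambda\mu\theta}$) to show that this specific $b_0$ lies to the left of the smaller zero of $q$.
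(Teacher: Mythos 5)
Your reduction to the quadratic $q(b_0)=\frac{M\mu_2}{\delta^2}b_0^2-\mu\theta b_0+\mu(\theta-\eta)$, the endpoint evaluations, and the location of the vertex are all correct, and up to that point you follow exactly the paper's own route. The decisive step is where you and the paper part ways, and you are the one who is right: the paper disposes of the vertex by asserting that the second hypothesis in \eqref{eq:condition1} forces $b^*=\frac{\mu\theta\delta^2}{2\mu_2M}>1$ (so that the quadratic is monotone on $[0,1]$ and $q(1)\ge 0$ finishes the proof), whereas, as you observe, $\frac{\mu\theta}{\mu_2}\le\frac{2M}{\delta^2}$ says precisely $b^*\le 1$. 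So you have in fact uncovered an error in the paper's proof: its argument is valid only if the second inequality is reversed to $\frac{\mu\theta}{\mu_2}\ge\frac{2M}{\delta^2}$ (in which case the two hypotheses are jointly satisfiable only for $\eta\le\theta/2$, consistent with your discriminant computation, which settles that regime under the hypotheses as written).

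The gap you flag, the regime $\eta>\theta/2$, is however not closable by your proposed repair, because the lemma as stated is false there even for admissible $b_0$. Structurally, with $s=b_0+b_1$ and $q=b_0^2+b_1^2$ fixed by \eqref{conditions1}, both hypotheses in \eqref{eq:condition1} are \emph{upper} bounds on $q$ given $s$, while one can check that $\mE[Z^0]>0$ is equivalent (for $b_0<b_1$) to $b_0b_1>\frac{\theta-\eta}{\theta}(b_0+b_1)$, i.e.\ to $q<s^2-\frac{2(\theta-\eta)}{\theta}s$; the hypotheses and the failure condition therefore pull in the same direction, so no admissibility argument can rescue the claim. Concretely, take $\lambda=\mu=\mu_2=\theta=1$, $\eta=\frac35$, $b_0=\frac34$, $b_1=\frac{11}{12}$. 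Then \eqref{conditions1} forces $M=\frac{13}{30}$ and $\delta^2=\frac{101}{144}$, both hypotheses of the lemma hold, since $\frac{M}{\delta^2}=\frac{312}{505}\ge\frac{303}{505}=\frac{\mu\eta}{\mu_2}$ and $\frac{2M}{\delta^2}=\frac{624}{505}\ge 1=\frac{\mu\theta}{\mu_2}$, and yet, using \eqref{ew} with $\rho=\frac{\lambda\mu_2b_0^2}{2\delta^2}$,
\[
\mE[Z^0]=\frac{T}{2}\Bigl[\lambda\mu(\theta b_0-\theta+\eta)-\frac{\lambda\mu_2 M}{\delta^2}\,b_0^2\Bigr]
=\frac{T}{2}\Bigl[\frac{7}{20}-\frac{351}{1010}\Bigr]=\frac{T}{808}>0.
\]
So the honest summary is: your argument proves the statement exactly in the regime where it is true (under \eqref{eq:condition1} together with $\eta\le\theta/2$, or under the reversed second hypothesis, for which the paper's endpoint argument becomes valid); the remaining case is not a missing step in your proof but a counterexample to the lemma, traceable to the reversed inequality in the paper's own proof.
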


\begin{proof}
We observe that $\mE[Z_0]\le 0$ is equivalent to
\begin{align}\label{eq:condition2}
- \frac{\lambda \mu_2 b_0^2}{\delta^2} M T +\lambda \mu (\theta b_0 - \theta + \eta)T \le 0,
\end{align}
for all $b_0 \in [0,1]$, where we have substituted  $\rho=\frac{\lambda \mu_2 b_0^2}{2\delta^2}$. To show that \eqref{eq:condition2} holds for all $b_0\in [0,1]$, we consider the function
\[
F(b)=- \frac{ \mu_2 b^2}{\delta^2} M  + \mu (\theta b - \theta + \eta).
\]
This function is concave and has a maximum at $b^*=\frac{\mu\theta\delta^2}{2\mu_2M}>0$.
We observe that $F(0)<0$ and that $F(1)=- \frac{ \mu_2}{\delta^2} M  + \mu \eta$ which is negative if the first of condition \eqref{eq:condition1} holds true. Moreover, under the second condition in \eqref{eq:condition1} we also get that  $b^*>1$, which guarantees that $\mE[Z]<0$.
\end{proof}

Condition \eqref{eq:condition1} is meaningful in terms of insurance and reinsurance premia. Indeed, it tells us that the reinsurance is cheap and the income from the direct insurance premia is high. In this case the result of Proposition \ref{prop:optimal_strategy} is intuitively clear: choosing a bigger retention level (less reinsurance) in the first period is better in terms of survival probability (i.e.\ minimises the ruin at times $T/2$ and $T$).

To better understand the different cases (i.e.\ $\mE[Z_0]\le 0$ and $\mE[Z_0]>0$) we let
\begin{align*}
G_0(y)= \left(1-\phi\left(\frac{-y-\frac{\mE\left[Z^0\right]}{\rho}}{\sqrt{\delta^2 \frac{1-\rho}{\rho}T}}\right)\right) f_{Y^0_{T/2}}(y),\\
G_1(y)= \left(1-\phi\left(\frac{-y+\frac{\mE\left[Z^0\right]}{\gamma}}{\sqrt{\delta^2 \frac{1-\gamma}{\gamma}T}}\right)\right) f_{Y^1_{T/2}}(y)
\end{align*}
for all $y>0$, so that
\begin{align*}
p(\mathbf{b})=\int_0^{\infty} G_0(y) \mathrm{d} y, \qquad p(\tilde{\mathbf{b}})=\int_0^{\infty} G_1(y) \mathrm{d} y
\end{align*}

Figure \ref{fig:survival_dominance} represents the densities of the survival probability (i.e. $G_0(y)$ and $G_1(y)$) relative to the strategy $\mathbf{b}=(b_0, b_1)$ (dashed line) and the strategy $\tilde{\mathbf{b}}(b_1, b_0)$ (solid line), under given parameters.

\begin{figure}[t]
  \centering
\includegraphics[width=.48\textwidth]{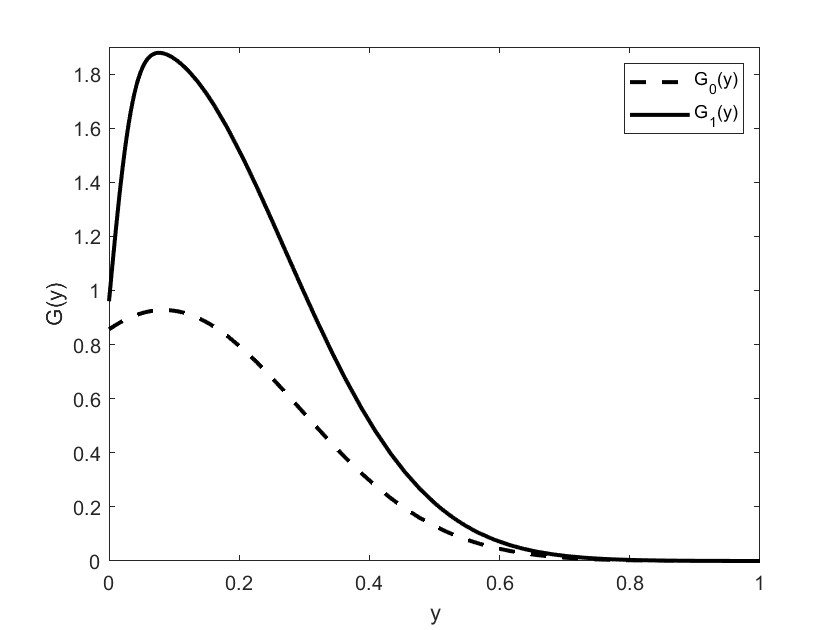}
  \includegraphics[width=.48\textwidth]{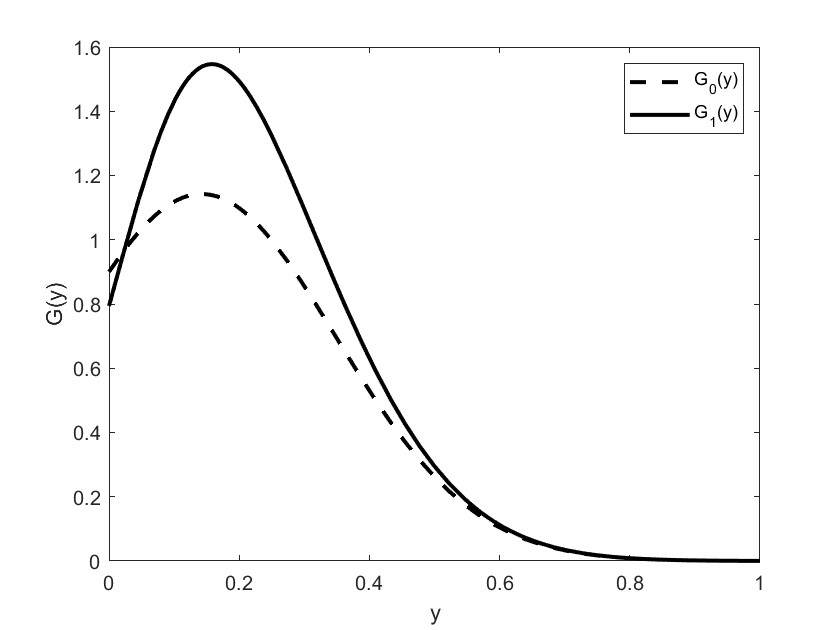}
  \caption{Survival density under the case $\mE[Z^0]<0$ (left panel), and under $\mE[Z^0]>0$ (right panel).\label{fig:survival_dominance}}
\end{figure}

The left panel corresponds to the case where condition \eqref{eq:condition1} holds, i.e., insurance is cheap and reinsurance is expensive. Here the survival probability of the strategy $\tilde{\mathbf{b}}$ dominates that of the strategy $\mathbf{b}$ for all values of $y$. In the right panel there exist a level $y^*>0$ (small) at which these two curves switch. However the area under the curve $G_1$ in the set $\{y>y^*\}$ largely compensates that in the set $\{y<y^*\}$. Such point $y^*$ corresponds to $y^*=\frac{2\mE[Z^0]}{1-2\rho}$ (and it only exists in case $\mE[Z^0]>0$). Note that, natural bounds on the value of $\eta$, i.e. $0 \le \eta \le \theta$, guarantee that such compensation of areas always applies and hence $p(\tilde{\mathbf{b}})>p(\mathbf{b})$, (see Proposition \ref{prop:optimal_strategy}).

\subsection{The penalisation problem}\label{sec:penalisation}
Suppose now, we have the following situation: the insurer may decide (at time zero) to have or to have not an update in the reinsurance contract at time $T/2$. It she updates the contract, she will pay a penalty amounting to $PT$ at time $T/2$. In case of no changes, no penalty will be applied. The strategies corresponding to these two different scenarios are chosen to achieve a Gaussian distribution at time $T$ with the same target variance $\delta^2T$. If the insurer does not change the strategy at time $T/2$ then the mean of the net collective is $M'T$, uniquely determined by the condition on the target variance. In case the strategy is changed at time $T/2$, the final expected wealth will be $M<M'$.
Next, we show that, due to the insurer's objective to minimise the ruin probability, changing the strategy at time $T/2$ is more preferable, even with a smaller expected mean.

We assume that $M=M'-P$.
Let $\hat{\mathbf{b}}=(\hat b, \hat b)$ be the strategy where the insurer decides to make no changes at time $T/2$ and let $\mathbf{b}=(b_0, b_1)$ and $\tilde{\mathbf{b}}=(b_1, b_0)$ be the admissible strategy where the insurer switches, with $b_0<b_1$. We know, by Proposition \ref{prop:optimal_strategy}, that strategy $\tilde{\mathbf{b}}$ is better than $\mathbf{b}$.
The survival probability of strategy $\hat{\mathbf{b}}$ is given by
\begin{align*}
p(\hat{\mathbf{b}})= \mP\left[\bar{X}^{\hat b}_{T/2}>0, \ \bar{X}^{\hat b}_T>0\right]
\end{align*}

We let $\hat Y=\bar{X}^{\hat b}_T$. Then we get that $Y\sim N(M'T, \delta^2T)$, and we observe that
\begin{align*}
\bar{X}^{\hat b}_{T/2}= \frac{1}{2} \hat Y + \hat Z,
\end{align*}
where
\begin{align*}
\hat Z\sim N\left(\lambda \mu (\theta \hat b-\theta+\eta)\frac{T}{2}-M'\frac{T}{2}, \frac{1}{4}\delta^2 T \right).
\end{align*}
Random variables $\hat Y$ and $\hat Z$ are independent. This implies that
\begin{align*}
p(\hat{\mathbf{b}})= \left[2\hat Z>-\hat Y, \hat Y>0\right]
=\int_{0}^{\infty}\left(1-\phi\left(\frac{-y-2\mE[\hat Z]}{\sqrt{\delta^2T}}\right)\right) f_{\hat Y}(y) \md y.
\end{align*}

Next, for the strategy $\tilde{\mathbf{b}}$ the survival probability is given by:
\begin{align*}
p(\tilde{\mathbf{b}})&= \int_{PT}^{\infty} \left(1-\phi\left(\frac{-y+PT-\frac{\mE\left[Z^1\right]}{\gamma}}{\sqrt{\delta^2 \frac{\rho}{\gamma}T}}\right)\right) f_{\hat Y}(y) \md  y,
\end{align*}
where $Z^1=N\left(\lambda\mu(\theta b_1-\theta+\eta)T/2-2\gamma (M'-P) T/2, \gamma(1-\gamma)\delta^2T\right)$, like in the proof of Proposition \ref{prop:optimal_strategy}.

It is clear that for $P=0$, there is a unique strategy, $\hat{\mathbf{{b}}}$, that allows to achieve the desired distribution for the net collective. For $P>0$, however the strategy $\hat{\mathbf{b}}$ has a survival probability that is always smaller than the survival probability of the optimal strategy $\mathbf{b}$ and larger than that of $\mathbf{\tilde b}$. This difference is illustrated in Figure \ref{fig:penalisation}.

\begin{figure}[ht]
  \centering
  \includegraphics[width=.6\textwidth]{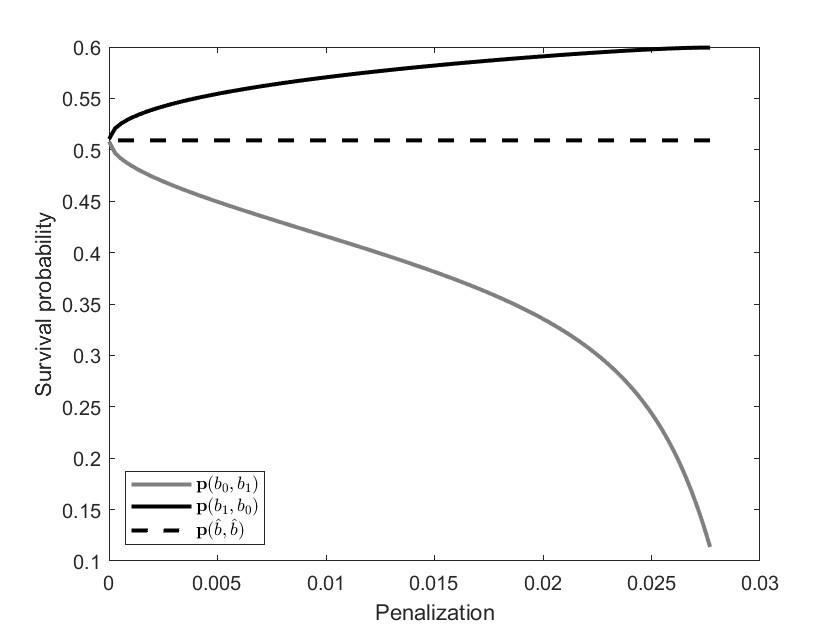}
  \caption{Survival probabilities under penalisation. Black line corresponds to the survival probability of the strategy $\tilde{\mathbf{b}}$, grey line to that of  $\mathbf{b}$ and the dashed line to that of the constant strategy $\hat{\mathbf{b}}$.}\label{fig:penalisation}
\end{figure}


\subsection{A 3 period model}\label{sec:3periods}
To explain the complexity of the problem for $n>2$, we consider the case $n=3$. Here, the form of the survival probability does not allow to derive conditions that ensure a clear dominance of one strategy. In addition, the computational time increases with the number of periods.

To give some intuition on how to deal with this case, we restrict to deterministic strategies $\mathbf{b}=(b_0, b_1, b_2)$. Then, it holds that
\begin{align*}
 b_0 + b_1 + b_2&=3\frac{M+\lambda \mu (\theta-\eta)}{\lambda \mu \theta}\;,\\
 b_0^2 + b_1^2 + b_2^2&=\frac{3\delta^2}{\lambda \mu_2}\;,\\
b_0, b_1, b_2& \in [0,1]
\end{align*}
which means that there are infinitely many combinations of $(b_0, b_1, b_2)$ that lead to the target distribution.

In particular, admissible triplets build (a part of) a circle as shown in Figure \ref{fig:circle}.

\begin{figure}[ht]
  \centering
  \includegraphics[width=.6\textwidth]{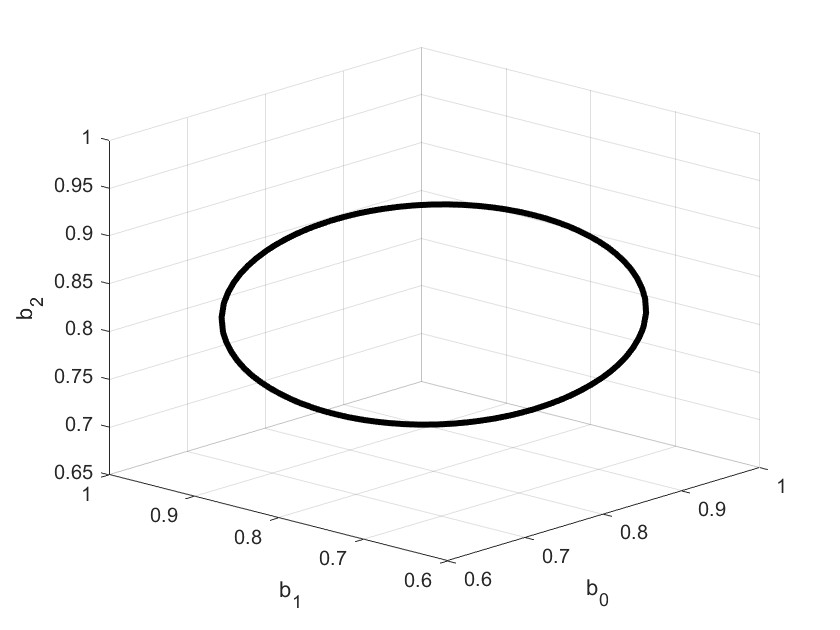}
  \caption{Admissible deterministic strategies for $n=3$ and parameters $T=1;
\mu=0.15;
\mu_2=0.06;
\lambda=1;
\theta=0.35;
\eta=0.2;
M=0.02;
\delta=0.2;$.}\label{fig:circle}
\end{figure}

To choose the ruin-minimising strategy we look at survival probability

\begin{align*}
p(\mathbf{b})=\mP\left[\bar X^{b_0}_{\frac{T}{3}}>0, \bar X^{b_0, b_1}_{\frac{2T}{3}}>0, \bar X^{\mathbf{b}}_{T}>0 \right]\;.
\end{align*}

We define auxiliary random variables $\zeta^0, \zeta^1$ such that
\begin{align*}
&\zeta^0\sim N\left(\lambda \mu (\theta b_0-\theta+\eta)\frac{T}{3}-M\rho_0T, \ \rho_0(1-\rho_0) \delta^2 T \right)\;,\\
&\rho_2\zeta^0+\zeta^1 \sim  N\left(-\lambda \mu \theta b_2\frac{T}{3}+\lambda \mu (\theta-\eta)T+(1-\rho_1)MT, \ \rho_1 (1-\rho_1) \delta^2 T\right)\;,
\end{align*}
which are correlated. Then, we have that

\begin{align*}
p(\mathbf{b})&=\mP\left[\frac{\zeta^0}{\rho_0}>- \bar X^{\mathbf{b}}_T, \ \frac{\rho_2\zeta^0+\zeta^1}{\rho_1} >-\bar X^{\mathbf{b}}_T , \ \bar X^{\mathbf{b}}_{T}>0 \right]
\\&=\int_0^{\infty} \mP\left[\frac{\zeta^0}{\rho_0}>y, \ \frac{\rho_2\zeta^0+\zeta^1}{\rho_1} >y\right] f_{Y}(y)\md y\;.
\end{align*}
where $Y\sim N(MT, \delta^2 T)$, and $f_Y(y)$ is the corresponding density. 

\begin{figure}[ht]
  \centering
  \includegraphics[width=.6\textwidth]{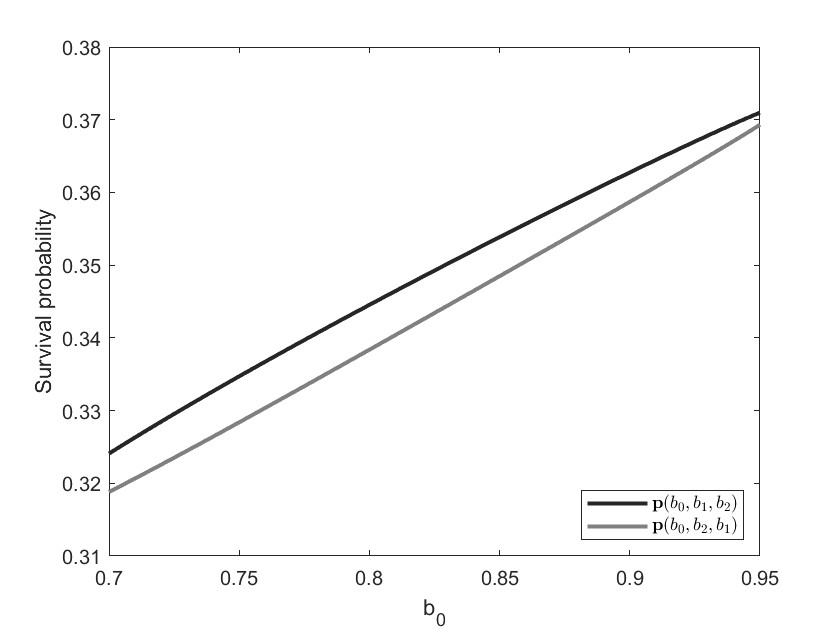}
  \caption{Survival probabilities as functions of the first component $b_0$ of the strategy $(b_0, b_1, b_2)$. \label{fig:n3}}
\end{figure}
Figure \ref{fig:n3} shows the survival probability with respect to the first component $b_0$. It is clear that, once $b_0$ is chosen, there are only two possible choices for $b_1$ and $b_2$. Suppose that, for instance $b_1>b_2$, then for a fixed $b_0$, the possible strategies are $(b_0, b_1, b_2)$ and $(b_0, b_1, b_2)$. We see that the survival probability is maximised by the largest $b_0$ and the combination that leads to the higher survival probability is the sorted one, i.e.\ $(b_0, b_1, b_2)$ with $b_0>b_1>b_2$.

We conclude the section by showing that the sorted sequence $\mathbf{b}=(b_0, b_1, b_2)$ leads to a bigger survival probability than the ``unsorted'' sequence $\tilde{\mathbf{b}}=(b_0, b_2, b_1)$. This means, in particular, that like shown in Figure \ref{fig:n3} any unsorted sequence will be overperformed by a sorted one.
\\To prove this, we denote by $p^x(\cdot)$ the survival probability of a strategy where the initial capital is $x$.
Then,
\begin{align*}
p(\mathbf{b})&=\mP\left[\bar X^{b_0}_{T/3}>0,  \bar X^{(b_0, b_1)}_{2/3T}>0, \bar X_T^{(b_0, b_1, b_2)}>0\right]=\mE\left[\one_{\bar X^{b_0}_{T/3}>0}  \ p^{\bar X^{b_0}_{T/3}}((b_1, b_2))\right]\\
&>\mE\left[\one_{\bar X^{b_0}_{T/3}>0} \  p^{\bar X^{b_0}_{T/3}}((b_2, b_1))\right]=p(\tilde{\mathbf{b}})\;,
\end{align*}
where the inequality follows from the case $n=2$.

\section{Conclusions}\label{sec:conclusion}
In this paper, we consider an insurance company whose objective is to choose a dividend payment or a reinsurance strategy leading to a certain surplus distribution. The question which strategy to prefer depends on the underlying target functional -- the value of expected discounted dividends (to be maximised) or the ruin probability (to be minimised).
Such a problem is motivated by the necessity of being able to compute risk measures, typically based on the distribution of a future loss at some fixed date. Fixing a terminal wealth distribution would allow to compute several risk measures at once, instead of choosing a specific constraint in the beginning of an optimisation task. In this line, the present paper represents the first step towards a more detailed and more realistic analysis of the problems faced by practitioners on the almost daily basis.
\\In addition, we would like to stress that the dividend related problems can be easily generalised to a continuous time framework. In the reinsurance setting we are able to fully analyse the 2-period problem. Since, reinsurance contracts are usually difficult or even impossible to update before the maturity date, this setting seems to be the most realistic one from a practical point of view.

Using a pool of possible distributions and mean/variance combinations instead of only one specific distribution is a possible extension direction. However, in this paper our main target is to introduce an idea and to illustrate with 2 clear settings how this idea can be implemented. For instance, in the reinsurance setting, the discrete nature of the problem does not allow to use the differential equation approach. Any return function would depend on the initial surplus and on the time. Changing the length of one interval would completely change the optimal strategy, as more weight will be put on the remaining intervals. Therefore, we are using purely probabilistic methods to prove our claims for the 2-period case. In the general n-period setting, the admissible strategies are not necessarily deterministic, and the optimal strategy may even not exist.

Our future research will concentrate on the extensions of the presented models. We plan to work on the n-period model for the reinsurance setting. We will also consider problems with non-normal target distributions and allow continuous time ruin-checks.

\section*{Acknowledgements}
The work of K. Colaneri and B. Salterini has been partially supported by Indam-Gnampa though the project U-UFMBAZ-2020-000791. Part of this work has been done while K. Colaneri and B. Salterini were visiting TU Vienna.
\medskip
\\\indent The research of Julia Eisenberg was funded by the Austrian Science Fund (FWF), Project number V 603-N35.

\bibliographystyle{plainnat}

\end{document}